\numberwithin{equation}{section}
\newtheorem{theorem}{Theorem}[section]
\newtheorem{lemma}[theorem]{Lemma}
\newtheorem{definition}[theorem]{Definition}
\newtheorem{proposition}[theorem]{Proposition}
\newtheorem{corollary}[theorem]{Corollary}
\newtheorem{example}[theorem]{Example}
\newtheorem{remark}[theorem]{Remark}
\newcommand{\NmuD}{{\mathrm{N}\mu\mathrm{D}}}
\newcommand{\sNmuD}{{\mathrm{sN}\mu\mathrm{D}}}
\newcommand{\muD}{{\mu\mathrm{D}}}
\newcommand{\Nmu}{{\mathrm{N}\mu}}
\newcommand{\NED}{\mathrm{NED}}
\newcommand{\NPD}{\mathrm{NPD}}
\newcommand{\gor}{\mathfrak{Gor}}
\newcommand{\st}{\mathfrak{st}}
\newcommand{\un}{\mathfrak{un}}
\newcommand{\St}{\mathrm{St}}
\newcommand{\Un}{\mathrm{Un}}
\newcommand{\PD}{\mathrm{PD}}
\newcommand{\Si}{\Sigma}
\newcommand{\ED}{\mathrm{ED}}
\newcommand{\Id}{\mathrm{Id}}
\renewcommand{\P}{\mathrm{P}}
\newcommand{\Q}{\mathrm{Q}}
\newcommand{\sgn}{\mathrm{sgn}}
\numberwithin{equation}{section}
\title[Nonuniform spectra $\&$ nonuniform kinematic similarity]{Spectrum invariance dilemma for nonuniformly kinematically similar systems}
\author[C. A. Gallegos]{Claudio A. Gallegos$^{*}$}
\address{Universidad de Chile, UCH, Facultad de Ciencias, Departamento de Matem\'aticas, Casilla 653, Santiago, Chile.}
\email{claudio.gallegos.castro@gmail.com, nestor.jara@ug.uchile.cl} 
\thanks{$^{*}$
	This author was partially supported by ANID/FONDECYT Postdoctorado No 3220147. }
\author[N. Jara]{N\'estor Jara$^{**}$}
\thanks{$^{**}$
	This author was partially supported by ANID, Beca de Doctorado Nacional 21220105.}
\date{}
\begin{document}
	
	\begin{abstract}
		We unveil instances where nonautonomous linear systems manifest distinct nonuniform $\mu$-dichotomy spectra despite admitting nonuniform $(\mu, \varepsilon)$-kinematic similarity. Exploring the theoretical foundations of this lack of invariance, we discern the pivotal influence of the parameters involved in the property of nonuniform $\mu$-dichotomy such as in the notion of nonuniform $(\mu, \varepsilon)$-kinematic similarity. To effectively comprehend these dynamics, we introduce the stable and unstable optimal ratio maps, along with the $\varepsilon$-neighborhood of the nonuniform $\mu$-dichotomy spectrum. These concepts provide a framework for understanding scenarios governed by the noninvariance of the nonuniform $\mu$-dichotomy spectrum.
	\end{abstract}
	
	\subjclass[2020]{Primary: 37D25.; Secondary: 34C41, 37C60.}

	%\dedicatory{}
	
	\keywords{Nonautonomous differential equations, Nonautonomus hyperbolicity, Nonuniform $\mu$-dichotomy, Nonuniform dichotomy spectrum, Kinematic similarity}

	\maketitle

	\section{Introduction}

	Consider the  nonautonomous linear differential equation
	\begin{equation}\label{613}
	\dot{y}=A(t)y(t),
	\end{equation}
	where $t\mapsto A(t)\in\mathbb{R}^{d\times d}$ is a locally integrable matrix valued function. As usual, we denote by $\Phi\colon\mathbb{R}\times \mathbb{R}\to \mathbb{R}^{d\times d}$, $(t,s)\mapsto \Phi(t,s)$, the evolution operator --or the transition matrix-- associated to \eqref{613}. That means, for every $s\in\mathbb{R}$ and $x_0\in\mathbb{R}^{d}$, the function $\Phi(\cdot,s)x_0$ solves the initial value problem \eqref{613} with initial condition $y(s)=x_0$.

	\subsection{kinematic similarity and dichotomy spectrum}
	Inspired by the notion of dichotomy spectrum for skew product flows introduced in \cite{Sacker}, S. Siegmund defined the concept of dichotomy spectrum for system \eqref{613} as the set 
	\[
	\Si(A):=\{\gamma\in\mathbb{R}: \dot{y}=[A(t)-\gamma \Id]y(t) \text{ admits no exponential dichotomy}\},
	\]
	demonstrating that is a finite union of closed intervals (spectral intervals), which could be unbounded but with compactness ensured if system \eqref{613} exhibits the property of bounded growth, see \cite[Theorem~3.1]{Siegmund}. In the subsequent work of S. Siegmund \cite{Siegmund2}, leveraging the spectral dichotomy theory developed in \cite{Siegmund}, a pivotal outcome was achieved: a compelling reducibility result for system \eqref{613}, see \cite[Theorem~3.2]{Siegmund2}. This result elucidates that system \eqref{613} is kinematically similar to a system which is block-diagonalized, where each block --of dimension less than $d$-- precisely corresponds to a spectral interval of $\Si(A)$. This interesting result relies on a fundamental property: if two systems are kinematically similar, then they have the same dichotomy spectrum, see \cite[Corollary~2.1]{Siegmund2}.
	
	The formulation of the dichotomy spectrum incorporates the notion of exponential dichotomy for nonautonomous systems, which in simple terms emulates the idea of hyperbolic equilibrium from the autonomous framework, see e.g. \cite{Coppel,Perron}. In 2005, L. Barreira and C. Valls \cite{Barreira7}, motivated to consider broader categories of hyperbolic dynamics, introduced the notion of {\it nonuniform exponential dichotomy} for system \eqref{613}. This concept extends the classical exponential dichotomy, and henceforth, the academic community have adopted to add the term ``uniform" when referring to the classic one. We suggest to the reader the book \cite{BV}, where several contributions to the development of the nonuniform theory are presented.

	Following the ideas of S. Siegmund, the authors X. Zhang \cite{Xiang} and J. Chu {\it et al.} \cite{Chu} independently developed a spectral theory by considering the notion of nonuniform exponential dichotomy (slightly modificated definitions are utilized on each article). As in the classical case but considering the assumption of nonuniform bounded growth, it is proved that the nonuniform dichotomy spectrum can be decomposed as finite union of compact intervals, see \cite[Corollary~2.11]{Chu} and \cite[Theorem~2.1]{Xiang}. In addition, a reducibility theorem is obtained by considering a fundamental property: the invariance of nonuniform dichotomy spectra via nonuniform kinematic similarity.

	In a recent work, C. Silva \cite{Silva} introduced the notion of nonuniform $\mu$-dichotomy, providing a unifying framework that extends and integrates the discussed exponential dichotomy concepts (see Definition~\ref{NmuD} in Section 2). Considering this nonuniform $\mu$-dichotomy, a spectral theory is developed and a reducibility theorem is derived by once again leveraging an essential property: the invariance of the nonuniform $\mu$-dichotomy spectra through nonuniform $(\mu,\varepsilon)$-kinematic similarity.

	\subsection{Main results and contributions}
	The aim of this article is to defy the conventional idea of invariance of the nonuniform $\mu$-dichotomy spectra for nonautonomous linear systems which are nonuniformly $(\mu,\varepsilon)$-kinematically similar. This assertion is explained on Remark \ref{660} and supported by Examples~\ref{ex1} and \ref{634}, in which we exhibit nonuniformly $(\mu,\varepsilon)$-kinematically similar systems presenting distinct nonuniform $\mu$-dichotomy spectra. We emphasize that the lack of invariance of the nonuniform dichotomy spectrum does not depend on the specific notion of nonuniform dichotomy under consideration. In all cases, noninvariance is detected; see Remark~\ref{641}.
	
	In this article, we will provide a theoretical foundation for this noninvariance phenomenon. In this context, we observe that the parameters involved in the notions of nonuniform $\mu$-dichotomy, $(\Nmu,\epsilon)$-growth, and $(\mu,\varepsilon)$-kinematic similarity, significantly  contribute in these dynamics. For that reason, we introduce some novel concepts with the aim of appropriately managing the above-mentioned parameters:
	\begin{itemize}
		\item {\it The region of stable and unstable constants}, see Definition~\ref{regions}.
		\item On every spectral gap, we define {\it the maps of optimal stable and unstable ratio} along with its underlying properties, see Definition~\ref{optimalmaps} together with the subsequent results from Section~\ref{Sect2}. 
		\item We introduce the {\it $\varepsilon$-neighborhood of the nonuniform $\mu$-dichotomy spectrum} and the {\it $\varepsilon$-interior of the nonuniform $\mu$-resolvent} for system \eqref{613}, see Definition~\ref{epsilonspectra}. In addition, we elucidate the relationship between this concept of spectrum and the classic one for systems that admit nonuniform $(\mu,\varepsilon)$-kinematic similarity, see Corollary~\ref{608}. 
	\end{itemize}
	
	By introducing this new concepts, our objective is to enhance the understanding in scenarios characterized by the noninvariance of the nonuniform $\mu$-dichotomy spectrum.

	\subsection{Consequences about the noninvariance of the nonuniform spectra}

	Reducibility results are relevant since they allow for an easier interpretation of nonautonomous equations, but also as a crucial initial step toward several results. For instance, a remarkable application of these type of construction, which typically needs dichotomy spectrum preservation, is a normal form result, which states that under a polynomial transformation, it is possible to eliminate suitable Taylor terms of a nonlinear perturbation. The first of these results for the nonautonomous framework was given by S. Siegmund in \cite{Siegmund3}.
	
	Furthermore, even a normal form result is just a tool in order to find a local smooth linearization, as given originally by H. Poincaré \cite{Poincare} for autonomous analytic complex equations, and by S. Sternberg \cite{Sternberg1,Sternberg2} for real equations. In the nonautonomous context, L. V. Cuong, S. T. Doan and S. Siegmund \cite{Cuong} were able to find a local smooth linearization for systems admitting the classic exponential dichotomy. Once again, for this construction, the invariance of the dichotomy spectrum is required.

	\section{Optimal dichotomy constants}\label{Sect2}
	
	In this section, we begin by recalling key concepts related to the nonuniform $\mu$-dichotomy presented in \cite{Silva}. Following this, we introduce {\it the optimal stable and unstable ratio functions}, providing a detailed exposition of their fundamental properties.

	\begin{definition}%\label{361}    
		A strictly increasing function $\mu\colon\mathbb{R}\to (0,+\infty)$ is said to be a \textbf{growth rate}, if $\mu(0)=1$, $\mu(t)\to +\infty$ as $t\to +\infty$, and $\mu(t)\to 0$ as $t\to -\infty$. Moreover, if  $\mu$ is differentiable, it is said a \textbf{differentiable growth rate}.
	\end{definition}

	\begin{definition}\label{NmuD}
		Let $\mu:\mathbb{R}\to (0,+\infty)$ be a growth rate. The system \eqref{613} admits \textbf{nonuniform $\mu$-dichotomy} if  there exist an invariant projector $t\mapsto \P(t)\in\mathbb{R}^{d\times d}$, that means $\P$ satisfies 
		\[
		\P(t)\Phi(t,s)=\Phi(t,s)\P(s),\qquad \text{for all $t,s\in\mathbb{R}$,}
		\]
		constants $K\geq 1$, $\alpha<0$, $\beta>0$ and $\theta,\nu\geq 0$ satisfying $\alpha+\theta<0$ and $\beta-\nu>0$, such that  
		\begin{equation}\label{600} 
		\left\{ \begin{array}{lc}
		\|\Phi(t,s)\P(s)\|\leq K
		\left(\dfrac{\mu(t)}{\mu(s)}\right)^\alpha\mu(s)^{\sgn(s)\theta}, &\forall\,\, t\geq s ,\\
		\\ \|\Phi(t,s)[\Id-\P(s)]\|\leq K
		\left(\dfrac{\mu(t)}{\mu(s)}\right)^\beta\mu(s)^{\sgn(s)\nu}, &\forall\,\,  t\leq s.
		\end{array}
		\right.
		\end{equation}
		Moreover, if $\theta=\nu=0$, then we say the system \eqref{613} admits \textbf{uniform $\mu$-dichotomy}.\end{definition}
	
	Throughout the text, we abbreviate the notions of nonuniform $\mu$-dichotomy by $\NmuD$ and uniform $\mu$-dichotomy by $\muD$. Usually, when we refer to a system having $\NmuD$, we will emphasize that admits $\NmuD$ with parameters $(\P;\alpha,\beta,\theta,\nu)$. We omit $K$ in the parameters because is not fundamental for our later purposes. For the case $\P=\Id$, there are no constants $\beta,\nu$, hence we denote $(\Id;\alpha,*,\theta,*)$ and for $\P=0$ we write $(0;*,\beta,*,\nu)$.
	
	The preceding concept broadens various dichotomies based on the growth rate $\mu$ and the parameters involved in the estimations:
	\begin{itemize}
		\item Define $\mu(t)=e^t$, for all $t\in\mathbb{R}$. Consider $\beta=-\alpha$ and $\theta=\nu=0$. In this case we recover the notion of {\it exponential dichotomy} ($\ED$) defined by O. Perron in \cite{Perron}. Moreover, in the absence of constraints on the parameters, we arrive at the concept of {\it nonuniform exponential dichotomy} ($\NED$), a topic extensively studied in \cite{Barreira2,Chu,Dragicevic2,Xiang}.
		
		\item Consider an arbitrary strictly increasing surjective function $\nu\colon[0,+\infty)\to[1,+\infty)$. This function $\nu$ induces a growth rate $\mu$ defined as
		\begin{equation*}
		\mu(t):= \left\{ \begin{array}{lcc}
		\nu(t) &  \text{ if } &   t\geq 0, \\
		\\ \frac{1}{\nu(|t|)} & \text{ if }& t\leq 0 .
		\end{array}
		\right.
		\end{equation*}
		For instance, by considering a function $\nu$ defined by $\nu(t)= t+1$, for all $t\geq0$, we obtain a growth rate $\mu$ associated to the {\it nonuniform polynomial dichotomy} ($\NPD$). If moreover $\theta=\nu=0$, we obtain the {\it polynomial dichotomy} ($\PD$), see e.g.  \cite{Dragicevic3,Dragicevic5,Dragicevic6}.
	\end{itemize}

	For $\gamma\in \mathbb{R}$ and a differentiable growth rate $\mu$, consider the system
	\begin{equation}\label{615}
	\dot{y}=\left[A(t)-\gamma\frac{\mu'(t)}{\mu(t)}\Id\right]y(t),
	\end{equation}
	where $t\mapsto A(t)$ is the matrix valued function from system \eqref{613}. We said that system \eqref{615} is the {\it $\gamma$-shifted} system associated to \eqref{613}. Clearly, the transition matrix $\Phi_\gamma$ of system \eqref{615} is given by 
	\[
	\Phi_\gamma(t,s)=\Phi(t,s)\left(\frac{\mu(t)}{\mu(s)}\right)^{-\gamma}, \quad t,s\in\mathbb{R},
	\]
	where $\Phi(t,s)$ is the transition matrix of system \eqref{613}. We call $\Phi_{\gamma}(t,s)$ as the {\it $\gamma$-shifted operator}.
	
	The $\gamma$-shifted system \eqref{615} serves a crucial role to the following  concept, see \cite[p. 623]{Silva}.
	
	\begin{definition}%\label{423}
		Let $\mu\colon\mathbb{R}\to (0,+\infty)$ be a differentiable growth rate. The \textbf{nonuniform $\mu-$dichotomy spectrum} of system \eqref{613} is the set defined by
		\[
		\Si_\NmuD(A):=\left\{\gamma\in\mathbb{R}: \text{ \eqref{615} does not admit }\NmuD\right\}\,.
		\]
		
		Moreover, the \textbf{nonuniform $\mu$-resolvent} of system \eqref{613} is the set defined by 
		\[
		\rho_\NmuD(A):=\mathbb{R}\setminus \Si_\NmuD(A),
		\]
		i.e. the complement set of $\Si_\NmuD(A)$. 
	\end{definition}
	
	\begin{remark}
		{\rm Analogously, the \textbf{uniform $\mu-$dichotomy spectrum} and the \textbf{uniform $\mu$-resolvent} of system \eqref{613} is defined as above by simply replacing the $\NmuD$ by $\muD$.}
	\end{remark}

	For $\gamma\in \rho_{\NmuD}(A)$, we know the $\gamma$-shifted system admits nonuniform  $\mu$-dichotomy. In general, we do not know the constants involved in said dichotomy and they are by no means unique. Indeed, if the $\gamma$-shifted system \eqref{615} admits $\NmuD$ with parameters $(\P;\alpha,\beta,\theta,\nu)$ satisfying
	\begin{equation*}
	\left\{ \begin{array}{lc}
	\|\Phi_\gamma(t,s)\P(s)\|\leq K
	\left( \mathlarger{\frac{\mu(t)}{\mu(s)}}\right)^{\alpha}\mu(s)^{\sgn(s)\theta}, &\forall\,\, t \geq s \\
	\\ \|\Phi_\gamma(t,s)[\Id-\P(s)]\|\leq K
	\left( \mathlarger{\frac{\mu(t)}{\mu(s)}}\right)^{\beta}\mu(s)^{\sgn(s)\nu}, &\forall\,\,  t\leq s,
	\end{array}
	\right.
	\end{equation*}
	then for every $\widetilde{\alpha}\in (\alpha,-\theta)$, $\widetilde{\beta}\in (\nu,\beta)$, $\widetilde{\theta}\in (\theta,-\widetilde{\alpha})$, and $\widetilde{\nu}\in (\nu,\widetilde{\beta})$, the following estimations are verified as well
	\begin{equation*}
	\left\{ \begin{array}{lc}
	\|\Phi_\gamma(t,s)\P(s)\|\leq K
	\left( \mathlarger{\frac{\mu(t)}{\mu(s)}}\right)^{\widetilde{\alpha}}\mu(s)^{\sgn(s)\widetilde{\theta}}, &\forall\,\, t \geq s \\
	\\ \|\Phi_\gamma(t,s)[\Id-\P(s)]\|\leq K
	\left( \mathlarger{\frac{\mu(t)}{\mu(s)}}\right)^{\widetilde{\beta}}\mu(s)^{\sgn(s)\widetilde{\nu}}, &\forall\,\,  t\leq s,
	\end{array}
	\right.
	\end{equation*}
	and thus, this defines once more a $\NmuD$ with parameters $(\P;\widetilde{\alpha},\widetilde{\beta},\widetilde{\theta},\widetilde{\nu})$ for the $\gamma$-shifted system \eqref{615}. This situation motivates us to define an optimal set of constants each time that a system admits a nonuniform $\mu$-dichotomy.

	In order to provide some general concepts, let us consider an arbitrary nonautonomous linear system 
	\begin{equation}\label{601}
	\dot{z}=B(t)z,
	\end{equation} 
	where $t\mapsto B(t)$ is a locally integrable matrix valued function. In addition, let $\Psi(t,s)$ be the transition matrix associated to \eqref{601}.
	
	\begin{definition}\label{regions}
		Assume the system \eqref{601} admits $\NmuD$ with an invariant projector $\P$.   The \textbf{region of stable constants} for \eqref{601} is the set defined by
		\[
		\St_\P:=\{(\alpha,\theta)\in \mathbb{R}^2: \text{ \eqref{601} admits }\NmuD\text{ with parameters }(\P;\alpha,\beta,\theta,\nu), \text{ for some }\beta,\nu\},
		\]
		and the \textbf{region of unstable constants} for \eqref{601}  is the set defined by    \[
		\Un_\P:=\{(\beta,\nu)\in \mathbb{R}^2: \text{ \eqref{601} admits }\NmuD\text{ with parameters }(\P;\alpha,\beta,\theta,\nu), \text{ for some }\alpha,\theta\}.
		\]
	\end{definition}
	
	\begin{remark}
		{\rm Note that these regions allow for a characterization of those systems with uniform dichotomy as those for which $\St_\P\cap(\mathbb{R}\times \{0\})\neq \emptyset$ and $\Un_\P\cap(\mathbb{R}\times \{0\})\neq \emptyset$. }
	\end{remark}
	
	Now we dedicate some remarks to make an estimation of these regions.
	
	\begin{remark}\label{616} \rm 
		Clearly, when system \eqref{601} admits $\NmuD$ with parameters $(\P;\alpha,\beta,\theta,\nu)$, the regions $\St_\P$ and $\Un_\P$ are nonempty sets, and the following inclusions follow
		\[
		\St_\P\subset \{(\alpha,\theta)\in \mathbb{R}^2: \alpha<0\leq \theta;\, \alpha+\theta<0\},
		\]
		and 
		\[
		\Un_\P\subset \{(\beta,\nu)\in \mathbb{R}^2: 0\leq \nu<\beta;\,\beta-\nu>0\}.
		\]
		
		On the other hand, from the estimations of the $\NmuD$, for $t=s$, we obtain
		\[
		\|\P(s)\|\leq K\mu(s)^{\sgn(s)\theta},\quad\text{and}\quad\|\Id-\P(s)\|\leq K\mu(s)^{\sgn(s)\nu},\quad\forall\, s\in \mathbb{R},
		\]
		which means that the regions of stable and unstable constants can be estimated by the projector involved in the dichotomy. Expressly, for a projector $\P$ we can define
		\[
		\widehat{\theta}_\P=\inf\left\{\theta\geq 0:\|\P(s)\|\leq K\mu(s)^{\sgn(s)\theta},\,\forall\,s\in \mathbb{R}, \text{ for some }K>0 \right\},
		\]
		and
		\[
		\widehat{\nu}_\P=\inf\left\{\nu\geq 0:\|\Id-\P(s)\|\leq K\mu(s)^{\sgn(s)\nu},\,\forall\,s\in \mathbb{R}, \text{ for some }K>0 \right\}.
		\]
		Therefore, we can refine the above inclusions as
		\[\St_\P\subset \{(\alpha,\theta)\in \mathbb{R}^2: \alpha<0\leq \widehat{\theta}_\P\leq\theta ;\, \alpha+\theta<0\},
		\]
		and 
		\[
		\Un_\P\subset \{(\beta,\nu)\in \mathbb{R}^2: 0\leq\widehat{\nu}_\P\leq \nu<\beta;\,\beta-\nu>0\}.
		\] 
	\end{remark}
	
	However, note that this does not mean that there are some $\alpha$ and $\beta$ such that $(\alpha,\widehat{\theta}_\P)\in \St_\P$ and $(\beta,\widehat{\nu}_\P)\in \Un_\P$. Consider for example a nonuniform contraction, {\it i.e.} a system which admits $\NmuD$ but does not admit $\muD$ and the projector is the identity. In this case, clearly $\widehat{\theta}_\Id=\widehat{\nu}_\Id=0$, see e.g. \cite[Proposition 2.3]{BV}.  In other words, the regions of possible constants can be estimated by the projector, but do not depend exclusively of it, and also depend on the system itself.

	In the next, we recall te notion of $\mu$-bounded growth for the nonuniform context \cite{Silva}.

	\begin{definition}%\label{424}
		The system \eqref{601} has \textbf{nonuniform $\mu$-bounded growth} with parameter $\epsilon>0$, or just \textbf{$(\Nmu,\epsilon)$-growth}, if there are constants $\widehat{K}\geq 1$, $a\geq 0$ such that
		\[
		\|\Psi(t,s)\|\leq \widehat{K}\left(\frac{\mu(t)}{\mu(s)}\right)^{\sgn(t-s)a}\mu(s)^{\sgn(s)\epsilon},\quad\forall\,t,s\in \mathbb{R}.
		\]
		
		Moreover, if $\epsilon=0$, it is said that the system \eqref{601} has \textbf{uniform $\mu$-bounded growth} o just \textbf{$\mu$-growth}.
	\end{definition}
	
	\begin{remark}\label{617}
		{\rm     It is immediate to note that if the system \eqref{613} has $(\Nmu,\epsilon)$-growth with constants $\widehat{K}\geq 1$ and $a\geq 0$, then the $\gamma$-shifted system \eqref{615} also has $(\Nmu,\epsilon)$-growth, with constants $\widehat{K}$ and $a+|\gamma|$.
		}
	\end{remark}
	
	The following result, inspired by \cite[Lemma~1]{Castaneda6}, relates the parameters of the $\NmuD$ with the constants of the $(\Nmu,\epsilon)$-growth of a system.
	\begin{lemma}\label{612}
		Assume the system \eqref{601} has $(\Nmu,\epsilon)$-growth with constants $a\geq 0$, $\widehat{K}\geq 1$, and also admits $\NmuD$ with parameters $(\P;\alpha,\beta,\theta,\nu)$. For $\P\neq 0$, we have $-(a+\epsilon)\leq\alpha$. In particular
		\[
		\St_\P\subset \{(\alpha,\theta)\in \mathbb{R}^2: -(a+\epsilon)\leq\alpha\}.
		\]
		
		Analogously, for $\P\neq \Id$, we have $\beta\leq a+\epsilon$. In particular
		\[
		\Un_\P\subset \{(\beta,\nu)\in \mathbb{R}^2: \beta\leq a+\epsilon\}.
		\]
	\end{lemma}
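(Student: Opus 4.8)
The plan is to play the two available upper bounds against each other using the cocycle identity $\Psi(s,t)\Psi(t,s)=\Id$ together with the invariance of the projector, and then let one time variable escape to $\pm\infty$ so that only the powers of $\mu(t)$ with a prescribed sign can survive.

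First, for the stable assertion: since $\P\neq 0$ there is $s_0\in\mathbb{R}$ with $\P(s_0)\neq 0$, hence $\|\P(s_0)\|>0$. Combining $\Psi(s_0,t)\Psi(t,s_0)=\Id$ with the invariance relation $\Psi(t,s_0)\P(s_0)=\P(t)\Psi(t,s_0)$, I get $\P(s_0)=\Psi(s_0,t)\,[\Psi(t,s_0)\P(s_0)]$ for every $t\in\mathbb{R}$, and therefore
\[
0<\|\P(s_0)\|\le \|\Psi(s_0,t)\|\cdot\|\Psi(t,s_0)\P(s_0)\|.
\]
Next I take $t>\max\{s_0,0\}$ and estimate the first factor with the $(\Nmu,\epsilon)$-growth hypothesis (noting $\sgn(s_0-t)=-1$, so $(\mu(s_0)/\mu(t))^{\sgn(s_0-t)a}=(\mu(t)/\mu(s_0))^{a}$ and $\mu(t)^{\sgn(t)\epsilon}=\mu(t)^{\epsilon}$), and the second factor with the stable inequality of the $\NmuD$ with parameters $(\P;\alpha,\beta,\theta,\nu)$. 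Collecting the terms, every factor that depends only on $s_0$ becomes a positive constant $C$ and the inequality reduces to $1\le C\,\mu(t)^{a+\alpha+\epsilon}$ for all large $t$. Since $\mu(t)\to+\infty$ as $t\to+\infty$, this is impossible unless $a+\alpha+\epsilon\ge 0$, i.e. $-(a+\epsilon)\le\alpha$; the stated inclusion for $\St_\P$ then follows at once from Definition~\ref{regions}.

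The unstable assertion is handled symmetrically. Choosing $s_0$ with $\Id-\P(s_0)\neq 0$ and using $\Id-\P(s_0)=\Psi(s_0,t)\Psi(t,s_0)[\Id-\P(s_0)]$, I take $t<\min\{s_0,0\}$, bound $\|\Psi(s_0,t)\|$ by the $(\Nmu,\epsilon)$-growth (now $\sgn(s_0-t)=+1$ and $\mu(t)^{\sgn(t)\epsilon}=\mu(t)^{-\epsilon}$) and $\|\Psi(t,s_0)[\Id-\P(s_0)]\|$ by the unstable inequality of the $\NmuD$. This produces $1\le C'\,\mu(t)^{\beta-a-\epsilon}$ for all sufficiently negative $t$; since $\mu(t)\to 0^{+}$ as $t\to-\infty$, it forces $\beta-a-\epsilon\le 0$, that is, $\beta\le a+\epsilon$, and hence the claimed inclusion for $\Un_\P$.

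The computation is essentially bookkeeping; the only point demanding care is keeping track of the nonuniform factors $\mu(s_0)^{\sgn(s_0)\theta}$ (resp. $\mu(s_0)^{\sgn(s_0)\nu}$), $\mu(t)^{\sgn(t)\epsilon}$, and the two powers of $\mu(s_0)$ arising from the ratios, so as to verify that the surviving exponent of $\mu(t)$ is exactly $a+\alpha+\epsilon$ (resp. $\beta-a-\epsilon$) and that everything else is a finite positive constant independent of $t$. I do not expect any genuine obstacle beyond this.
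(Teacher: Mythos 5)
Your proof is correct and follows essentially the same route as the paper's: use the cocycle identity together with a nonzero element of the (co)range of the projector, bound the two factors by the $(\Nmu,\epsilon)$-growth estimate and the $\NmuD$ estimate respectively, and let $t\to+\infty$ (resp. $t\to-\infty$) to derive a contradiction if $a+\alpha+\epsilon<0$ (resp. $\beta-a-\epsilon>0$). The only cosmetic difference is that the paper fixes $0<s<t$ and a vector $\xi\in\P(s)\mathbb{R}^d$ rather than working at the operator-norm level with $\P(s_0)$; both are equivalent, and your explicit choice of $t>\max\{s_0,0\}$ and $t<\min\{s_0,0\}$ to control $\sgn(t)$ is exactly the right bookkeeping.
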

	
	\begin{proof}
		We prove only the first inequality, since the other case follows similarly. Let $0<s<t$. Since $\P\neq 0$, we can chose $\xi\in \P(s)\mathbb{R}^d$ with $\xi\neq 0$. Then we have the following estimation
		\begin{align*}
		\|\xi\|=\|\Psi(s,t)\Psi(t,s)\xi\|&\leq \|\Psi(s,t)\|\cdot \|\Psi(t,s)\P(s)\xi\|\\
		&\leq\widehat{K}\left(\frac{\mu(s)}{\mu(t)}\right)^{\sgn(s-t)a}\mu(t)^{\sgn(t)\epsilon}K\left(\frac{\mu(t)}{\mu(s)}\right)^{\alpha}\mu(s)^{\sgn(s)\theta}\|\xi\|,
		\end{align*}
		from which we deduce
		\[
		1\leq \widehat{K}K\mu(t)^{a+\alpha+\epsilon}\mu(s)^{\theta-a-\alpha}, \quad \forall\,t>s.
		\]
		
		Note that considering $a+\alpha+\epsilon<0$ and taking the limit as $t\to +\infty$, the right-hand side tends to zero, which is clearly a contradiction. Thus, we must have $-(a+\epsilon)\leq \alpha$.
	\end{proof}
	
	\begin{remark}\label{619}
		{ \rm Taking into account Remark~\ref{616} and Lemma~\ref{612}, if we assume the system \eqref{601} has $(\Nmu,\epsilon)$-growth with constants $a\geq 0$ and $\widehat{K}\geq 1$, and also admits $\NmuD$ with invariant projector $\P$, then we deduce 
			\[
			\St_\P\subset \{(\alpha,\theta)\in \mathbb{R}^2: -(a+\epsilon)\leq\alpha<0\leq \widehat{\theta}_\P\leq\theta ;\, \alpha+\theta<0\},
			\]
			and 
			\[
			\Un_\P\subset \{(\beta,\nu)\in \mathbb{R}^2: 0\leq\widehat{\nu}_\P\leq \nu<\beta\leq a+\epsilon;\,\beta-\nu>0\}.
			\]
			
			Therefore, both $\St_\P$ and $\Un_\P$ are contained in bounded sets, as illustrated on Figure \ref{635}.  
		}
	\end{remark}

	\begin{figure}[h]
		\centering
		\subfloat[Region $\St_\P$]{
			%\label{f:gato}
			\includegraphics[width=0.48\textwidth]{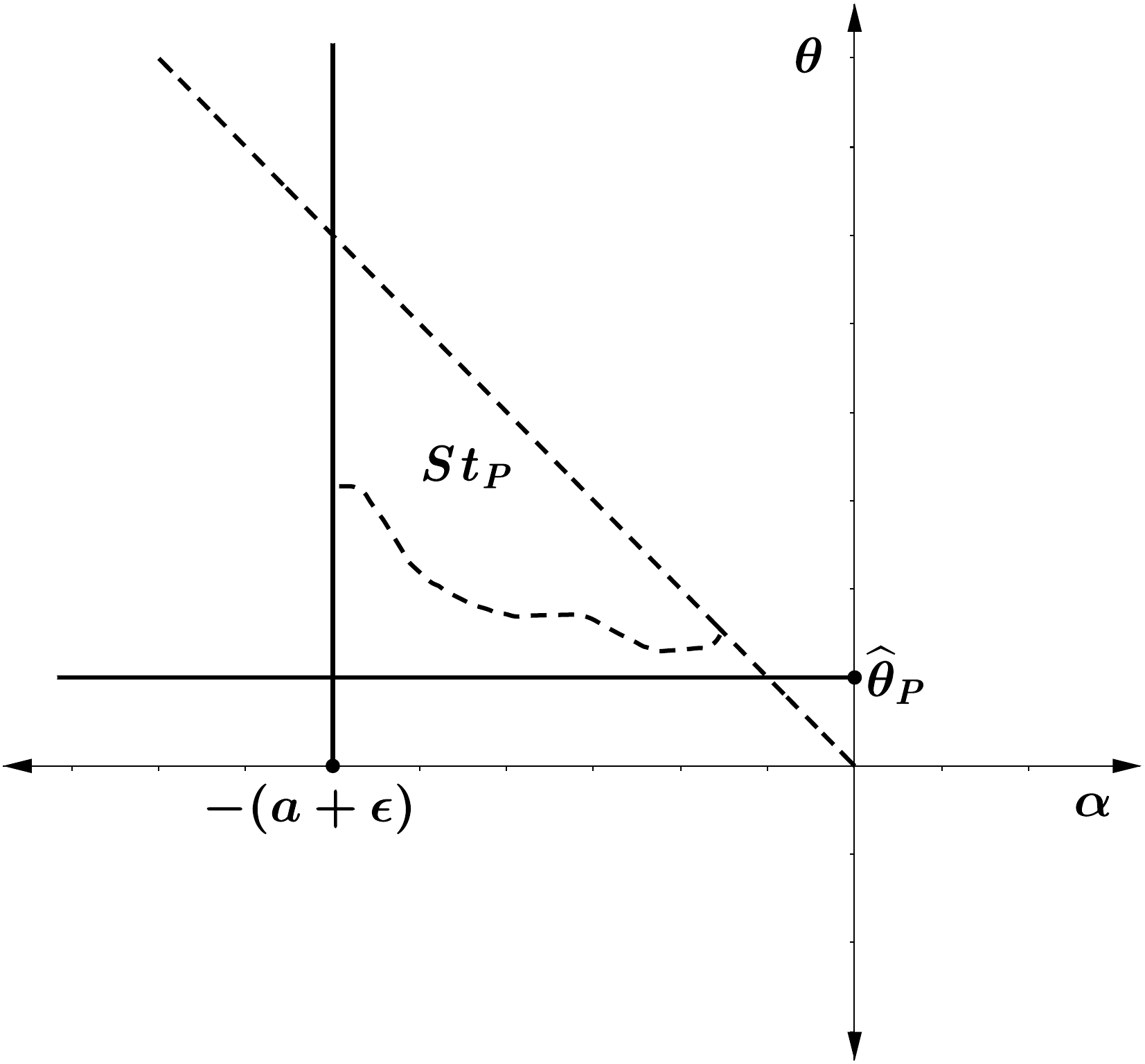}}
		\subfloat[Region $\Un_\P$]{
			% \label{f:tigre}
			\includegraphics[width=0.48\textwidth]{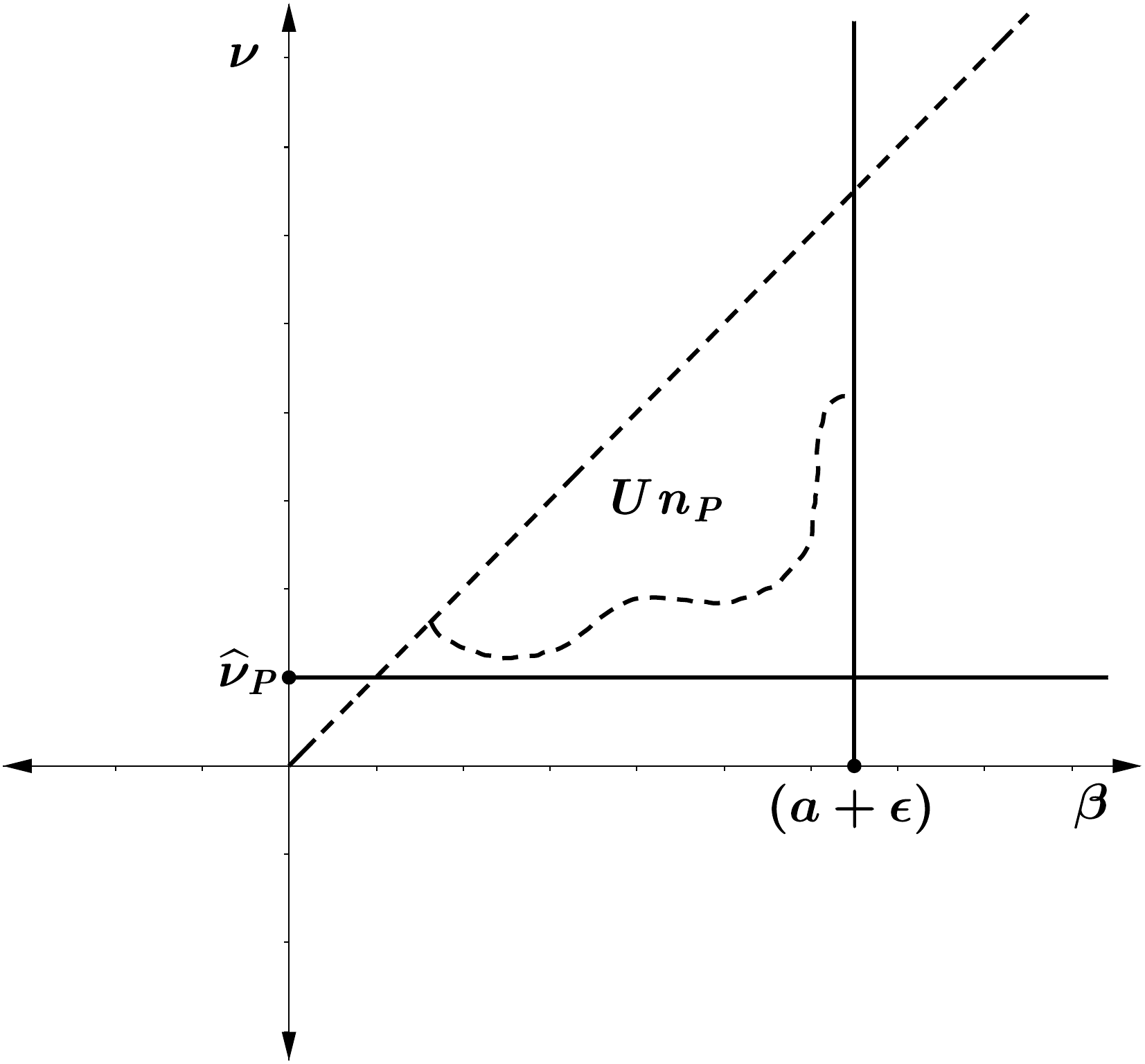}}
		\caption{Contentions from Remark \ref{619}}
		\label{635}
	\end{figure}
	
	The preceding remark motivates us  to introduce the concept of optimal parameters associated with the $\NmuD$.
	
	\begin{definition}\label{stpunp}
		Assume the system \eqref{601} admits $(\Nmu,\epsilon)$-growth and $\NmuD$ with an invariant projector $\P$. 
		\begin{itemize}
			\item The \textbf{optimal stable ratio} is defined by
			\[
			\st_\P:=\inf\{\alpha+\theta:(\alpha,\theta)\in \St_\P\}.
			\]
			
			\item The \textbf{optimal unstable ratio} is defined by
			\[
			\un_\P:=\sup\{\beta-\nu:(\beta,\nu)\in \Un_\P\}.
			\]
		\end{itemize}
	\end{definition}

	In the next, we recall a characterization of the nonuniform $\mu$-dichotomy spectrum 
	
	\begin{theorem} \label{602}
		{\rm (\cite[Theorem 8]{Silva})} 
		Assume the system \eqref{613} has $(\Nmu,\epsilon)$-growth. Then there exists some $n\in \{1,\dots,d\}$ such that the nonuniform $\mu$-dichotomy spectrum of \eqref{613} is nonempty, compact and has the form
		\[
		\Si_\NmuD(A)=\bigcup_{i=1}^n[a_i,b_i],
		\]
		where $a_i\leq b_i<a_{i+1}$.
	\end{theorem}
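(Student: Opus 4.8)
The plan is to mimic the classical Sacker–Sell / Siegmund spectral theory, adapting it to the nonuniform $\mu$-setting; the statement is Theorem~8 of \cite{Silva}, so the argument should follow that reference closely, but I sketch the natural route. First I would establish that the resolvent set $\rho_\NmuD(A)$ is open and that the ``type'' of dichotomy (i.e.\ the rank of the invariant projector $\P$) is locally constant on each connected component: if the $\gamma$-shifted system admits $\NmuD$ with projector $\P_\gamma$, then for $\widetilde\gamma$ close to $\gamma$ the shifted operator $\Phi_{\widetilde\gamma}(t,s)=\Phi_\gamma(t,s)(\mu(t)/\mu(s))^{\gamma-\widetilde\gamma}$ differs from $\Phi_\gamma$ by a factor whose exponent $\gamma-\widetilde\gamma$ can be absorbed into the spectral gap $(\alpha,-\theta)$ on the stable side and $(\nu,\beta)$ on the unstable side; this uses precisely the ``room to spare'' in the parameters noted in the paragraph preceding Definition~\ref{regions}, together with Remark~\ref{617} on the persistence of $(\Nmu,\epsilon)$-growth for shifted systems. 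Hence $\rho_\NmuD(A)$ is a union of open intervals, and $\Si_\NmuD(A)$ is closed.

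Next I would prove boundedness of $\Si_\NmuD(A)$ using the $(\Nmu,\epsilon)$-growth hypothesis: if $\gamma$ is very large and positive, the $\gamma$-shifted operator satisfies, for $t\ge s$, $\|\Phi_\gamma(t,s)\|\le \widehat K (\mu(t)/\mu(s))^{a-\gamma}\mu(s)^{\sgn(s)\epsilon}$, and $a-\gamma<0$ with $a-\gamma+\epsilon<0$ once $\gamma>a+\epsilon$, so \eqref{615} admits $\NmuD$ with $\P=\Id$; symmetrically, for $\gamma<-(a+\epsilon)$ one gets $\NmuD$ with $\P=0$. Thus $\Si_\NmuD(A)\subset[-(a+\epsilon),a+\epsilon]$, and combined with closedness it is compact (possibly empty at this stage). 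Emptiness is excluded by a connectedness/rank-jump argument: on the far right the admissible projector has full rank $d$, on the far left it has rank $0$; if $\rho_\NmuD(A)=\mathbb{R}$ the locally constant rank function would be globally constant, contradicting $d\neq 0$ since $d\ge1$. So there is at least one spectral point, giving $n\ge1$.

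It remains to bound the number of spectral intervals by $d$ and to get the ordering $a_i\le b_i<a_{i+1}$. The ordering is automatic once we know $\Si_\NmuD(A)$ is a finite union of disjoint closed intervals listed from left to right. For finiteness with the bound $n\le d$, the key device is a spectral filtration: pick points $\gamma_1<\gamma_2$ in distinct resolvent components, with associated projectors $\P_{\gamma_1},\P_{\gamma_2}$; a comparison of the stable subspaces of the two shifted systems shows $\im\P_{\gamma_1}(s)\subsetneq\im\P_{\gamma_2}(s)$ strictly, because some solution direction that decays slower than rate $\gamma_1$ but faster than rate $\gamma_2$ lies in the second stable bundle but not the first. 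Each resolvent component thus corresponds to a distinct value of $\mathrm{rank}\,\P$, which ranges in $\{0,1,\dots,d\}$, so there are at most $d+1$ resolvent components and hence at most $d$ spectral intervals. The main obstacle I anticipate is the last step: constructing the strictly increasing filtration of invariant stable bundles rigorously in the nonuniform $\mu$-setting — one must show that the bundles $\im\P_\gamma$ are genuinely independent of the choice of dichotomy data within a resolvent component (well-definedness), that they are nested, and that the inclusion is strict across a spectral interval; this is where the nonuniform weights $\mu(s)^{\sgn(s)\theta}$ complicate the usual uniform estimates, and it is the heart of the cited theorem.
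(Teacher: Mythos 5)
The paper does not prove this statement; it is quoted verbatim as \cite[Theorem~8]{Silva} and used as a black box, so there is no ``paper's own proof'' to compare against. That said, your sketch is the standard Sacker--Sell/Siegmund route and is the one a reader would expect Silva to follow: openness of $\rho_\NmuD(A)$ via the slack $(\alpha,-\theta)$ and $(\nu,\beta)$ in the dichotomy exponents under small shifts of $\gamma$; boundedness of the spectrum from $(\Nmu,\epsilon)$-growth giving $\P=\Id$ for $\gamma>a+\epsilon$ and $\P=0$ for $\gamma<-(a+\epsilon)$; nonemptiness from the locally constant rank jumping from $0$ to $d\geq1$; and $n\le d$ from a strictly nested filtration of stable bundles across spectral intervals. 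The only caution I would add is that the openness step deserves to be phrased carefully: you must show that for $\widetilde\gamma$ in a small neighborhood of $\gamma\in\rho_\NmuD(A)$, the shifted system admits $\NmuD$ \emph{with the same projector $\P_\gamma$}, which is exactly what makes the rank locally constant; this is the content of \cite[Lemma~5, Lemma~6]{Silva} invoked in the remark after the theorem, and it is where the nonuniform weight $\mu(s)^{\sgn(s)\theta}$ forces you to track that the slack condition $\alpha+\theta<0$, $\beta-\nu>0$ survives the perturbation, not merely $\alpha<0<\beta$. You flag the well-definedness and strict nesting of the bundles as the heart of the matter, which is accurate; with those details supplied the argument is complete.
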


	The compact intervals $[a_i,b_i]$, for $i=1,\dots,n$, described on Theorem \ref{602}, are called spectral intervals. Each open interval that compose the resolvent set is termed a {\it spectral gap}. Let us denote each spectral gap by $(b_i,a_{i+1})$, for $i=0,\dots,n+1$, where $b_0=-\infty$ and $a_{n+1}=+\infty$. 
	
	\begin{remark}
		{\rm  
			For any element in the resolvent set, the correspondent shifted system admits $\NmuD$. In \cite[Lemma 5, Lemma 6]{Silva} it was established that for any two numbers $\gamma,\zeta$ contained in the same spectral gap $(b_i,a_{i+1})$, the projector associated to the dichotomy of both the $\gamma$-shifted and the $\zeta$-shifted systems are the same. In other words, every spectral gap has a unique invariant projector $\P$ associated to it.

			It is immediately deduced from the construction of the spectrum \cite{Silva}, that on every bounded spectral gap, the associated projector $\P$ is neither $0$ nor $\Id$. Nevertheless, on the unbounded spectral gap $(-\infty,a_1)$, if it is contained on the resolvent set, the projector is always $0$. Similarly, if the spectral gap $(b_n,+\infty)$ is contained on the resolvent set, its associated projector is always $\Id$. Note that by Theorem \ref{602}, if a system has $(\Nmu,\epsilon)$-growth, then it is guaranteed that both unbounded spectral gaps are contained in its resolvent set.
		}
	\end{remark}

	Let $\gamma$ be an element in the spectral gap $(b_i,a_{i+1})$. Since the $\gamma$-shifted system admits $\NmuD$, it follows that at least one of the regions of constants is nonempty. Specifically, for $i=1,\dots,n$, the region of stable constants is nonempty and we denote it by $\St_\P^\gamma$, while for $i=0,\dots,n-1$, the region on unstable constants is nonempty and we denote it by $\Un_\P^\gamma$. Hence, for the correspondent index $i$, we can define two functions, $\St_\P\colon(b_i,a_{i+1})\to 2^{\mathbb{R}}$, $\gamma\mapsto \St_\P^\gamma$ and $\Un_\P\colon(b_i,a_{i+1})\to 2^{\mathbb{R}}$, $\gamma\mapsto \Un_\P^\gamma$. 
	
	From Remarks~\ref{617} and \ref{619}, we deduce
	\[
	\emptyset\neq\St_\P^\gamma\subset \left\{(\alpha,\theta)\in \mathbb{R}^2: -(a+\epsilon+ |\gamma|)\leq \alpha<0\leq \widehat{\theta}_\P\leq\theta;\, \alpha+\theta<0  \right\},
	\]
	for $i=1,\dots,n$, and 
	\[
	\emptyset\neq\Un_\P^\gamma\subset \left\{(\beta,\nu)\in \mathbb{R}^2: 0\leq\widehat{\nu}_\P\leq \nu<\beta\leq a+\epsilon+ |\gamma|;\,\beta-\nu>0 \right\},
	\]
	for $i=0,\dots,n-1$. Note that in both cases, the sets on the right-hand side of the previous inclusions are bounded. Moreover, in the case that the spectral gap $(b_i,a_{i+1})$ is bounded, {\it i.e.} for $i=1,\dots,n-1$, the sets on the right-hand side are uniformly bounded, that is
	\[
	\emptyset\neq\St_\P^\gamma\subset \left\{(\alpha,\theta)\in \mathbb{R}^2: -(a+\epsilon+ \max\{|b_i|,|a_{i+1}|\})\leq \alpha<0\leq \widehat{\theta}_\P\leq\theta;\, \alpha+\theta<0  \right\},
	\]
	and
	\[
	\emptyset\neq\Un_\P^\gamma\subset \left\{(\beta,\nu)\in \mathbb{R}^2: 0\leq\widehat{\nu}_\P\leq \nu<\beta\leq a+\epsilon+ \max\{|b_i|,|a_{i+1}|\};\,\beta-\nu>0 \right\},
	\]
	therefore, the functions $\gamma \mapsto\St_\P^\gamma$ and $\gamma\mapsto\Un_\P^\gamma$ will be bounded on every bounded spectral gap.

	Similarly, in view of Definition~\ref{stpunp}, it makes sense to define the following maps
	\begin{definition}\label{optimalmaps}
		Assume the system \eqref{601} admits $(\Nmu,\epsilon)$-growth.    Let $(b_i,a_{i+1})$ be an spectral gap. 
		\begin{itemize}
			\item For $i=1,\dots,n$, the function $\st_\P\colon(b_i,a_{i+1})\to \mathbb{R}$ defined as 
			\[
			\gamma\mapsto \st_\P^\gamma:=\inf\{\alpha+\theta:(\alpha,\theta)\in \St_\P^{\gamma}\},
			\]
			is called \textbf{function of optimal stable ratio}.
			
			\item For $i=0,\dots,n-1$, the function $\un_\P\colon(b_i,a_{i+1})\to \mathbb{R}$ defined as 
			\[
			\gamma\mapsto \un_\P^\gamma:=\sup\{\beta-\nu:(\beta,\nu)\in \Un_\P^{\gamma}\},
			\]
			is called \textbf{function of optimal unstable ratio}.
		\end{itemize}
	\end{definition}
	
	In the remainder of this section, we will delve into the properties of the functions $\st_\P$ and $\un_\P$. 
	
	\begin{lemma}\label{622}
		On every spectral gap, the maps $\st_\P$ and $\un_\P$ are decreasing.
	\end{lemma}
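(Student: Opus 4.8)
The plan is to fix a spectral gap $(b_i,a_{i+1})$ together with its (unique) associated invariant projector $\P$, pick two shift parameters $\gamma<\zeta$ in this gap, and directly compare the regions $\St_\P^\gamma$ with $\St_\P^\zeta$ (resp.\ $\Un_\P^\gamma$ with $\Un_\P^\zeta$). The whole mechanism is the scaling identity $\Phi_\zeta(t,s)=\Phi_\gamma(t,s)\bigl(\mu(t)/\mu(s)\bigr)^{\gamma-\zeta}$, which shows that a stable (resp.\ unstable) estimate for one shifted system is, after absorbing the scalar factor into the norm, a stable (resp.\ unstable) estimate for the other one in which only the ratio-exponent has been translated by $\zeta-\gamma$.

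For the claim on $\st_\P$ (so $i\in\{1,\dots,n\}$), I would take an arbitrary $(\alpha,\theta)\in\St_\P^\gamma$ and multiply the corresponding stable bound by $\bigl(\mu(t)/\mu(s)\bigr)^{\gamma-\zeta}$; this yields, for $t\ge s$, the stable bound for the $\zeta$-shifted system with parameters $(\alpha-(\zeta-\gamma),\theta)$. Since $\zeta-\gamma>0$, the sign requirements of Definition~\ref{NmuD}, namely $\alpha-(\zeta-\gamma)<0$ and $(\alpha-(\zeta-\gamma))+\theta<0$, are still satisfied. To conclude $(\alpha-(\zeta-\gamma),\theta)\in\St_\P^\zeta$ one still needs a compatible unstable estimate for the $\zeta$-shifted system; this is available because $\zeta$ lies in the same gap, so the $\zeta$-shifted system admits $\NmuD$ with the very same projector $\P$ (and on the unbounded gap $(b_n,+\infty)$, where $\P=\Id$, no unstable part is needed at all) — one just takes the larger of the two constants $K$. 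Taking the infimum over $(\alpha,\theta)\in\St_\P^\gamma$ then gives $\st_\P^\zeta\le\st_\P^\gamma-(\zeta-\gamma)<\st_\P^\gamma$.

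For $\un_\P$ (so $i\in\{0,\dots,n-1\}$) I would run the symmetric argument, but now translating the ratio-exponent in the opposite direction: starting from $(\beta,\nu)\in\Un_\P^\zeta$ and multiplying the unstable bound by $\bigl(\mu(t)/\mu(s)\bigr)^{\zeta-\gamma}$ produces, for $t\le s$, the unstable bound for the $\gamma$-shifted system with parameters $(\beta+(\zeta-\gamma),\nu)$; the requirements $\beta+(\zeta-\gamma)>0$ and $(\beta+(\zeta-\gamma))-\nu>0$ hold trivially, a compatible stable estimate for the $\gamma$-shifted system exists by the same-gap/same-projector fact (or is not needed when $\P=0$ on $(-\infty,a_1)$), hence $(\beta+(\zeta-\gamma),\nu)\in\Un_\P^\gamma$, and taking the supremum gives $\un_\P^\gamma\ge\un_\P^\zeta+(\zeta-\gamma)>\un_\P^\zeta$.

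I do not anticipate a real obstacle: the computation is a one-line rescaling and the rest is checking the inequalities in Definition~\ref{NmuD}. The only point deserving attention — and where a hasty argument might slip — is that one must not translate the ratio-exponent in the direction that could violate $\alpha+\theta<0$ (on the stable side) or $\beta-\nu>0$ (on the unstable side); this is precisely why the two halves are pushed in opposite directions above, and why one compares $\gamma$ with $\zeta$ already known to lie in the same spectral gap, so that the ``other half'' of the dichotomy is never lost. As a byproduct the argument shows that both maps in fact decrease with slope at most $-1$.
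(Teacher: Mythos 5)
Your proof is correct and follows essentially the same route as the paper's: both hinge on the scaling identity $\Phi_\zeta(t,s)=\Phi_\gamma(t,s)\left(\mu(t)/\mu(s)\right)^{\gamma-\zeta}$, with the paper simply discarding the factor $\left(\mu(t)/\mu(s)\right)^{\gamma-\zeta}\le 1$ (for $t\ge s$, $\gamma<\zeta$) to obtain the inclusion $\St_\P^\gamma\subset\St_\P^\zeta$, whereas you keep it and obtain the sharper quantitative bound $\st_\P^\zeta\le\st_\P^\gamma-(\zeta-\gamma)$. This refinement (decrease with slope at most $-1$) is a nice byproduct and is in fact consistent with Theorem~\ref{620} and the conjecture the authors raise at the end of the paper.
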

	\begin{proof}
		We prove that $\st_\P$ is decreasing and $\un_\P$ follows similarly.   Let $\gamma,\zeta\in (b_i,a_{i+1})$ with $\gamma<\zeta$. We will establish that $\St_\P^\gamma\subset\St_\P^\zeta$, from which it follows that $\st_\P^\zeta\leq\st_\P^\gamma$. For $(\alpha,\theta)\in \St_\P^\gamma$, there exists a constant $K>0$ such that
		\begin{equation*} %\label{610}
		\|\Phi_\gamma(t,s)\P(s)\|\leq K\left(\frac{\mu(t)}{\mu(s)}\right)^{\alpha}\mu(s)^{\sgn(s)\theta},\quad \forall\,t\geq s.
		\end{equation*}
		
		Now, as $\zeta> \gamma$, then for $t\geq s$ we have $\left(\frac{\mu(t)}{\mu(s)}\right)^{\gamma-\zeta}<1$. Moreover, we have the identity $\Phi_\zeta(t,s)=\Phi_\gamma(t,s)\left(\frac{\mu(t)}{\mu(s)}\right)^{\gamma-\zeta}$, for all $t,s\in\mathbb{R}$. Thus, we get the estimation
		\begin{align*}
		\|\Phi_\zeta(t,s)\P(s)\|&\leq K\left(\frac{\mu(t)}{\mu(s)}\right)^{\alpha+\gamma-\zeta}\mu(s)^{\sgn(s)\theta}\\
		&\leq K\left(\frac{\mu(t)}{\mu(s)}\right)^{\alpha}\mu(s)^{\sgn(s)\theta} ,\quad \forall\,t\geq s,
		\end{align*}
		which implies that $(\alpha,\theta)\in \St_\P^\zeta$. Therefore, we infer that $\St_\P^\gamma\subset \St_\P^\zeta$, obtaining the desired result. 
	\end{proof} 
	
	\begin{proposition}\label{625}
		On every spectral gap, the maps $\st_\P$ and $\un_\P$ are continuous.
	\end{proposition}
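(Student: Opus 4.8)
The plan is to sharpen the inclusion $\St_\P^\gamma\subset\St_\P^\zeta$ (for $\gamma<\zeta$ in a fixed spectral gap) exploited in the proof of Lemma~\ref{622} into a \emph{two-sided} comparison of optimal ratios, from which continuity — in fact Lipschitz continuity — of $\st_\P$ follows immediately; $\un_\P$ is treated by the mirror-image argument. Throughout one uses the identity $\Phi_\zeta(t,s)=\Phi_\gamma(t,s)\big(\mu(t)/\mu(s)\big)^{\gamma-\zeta}$ together with the fact, recalled in the text, that all shifts inside one gap carry the \emph{same} invariant projector $\P$.

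Fix the gap $(b_i,a_{i+1})$ with $i\in\{1,\dots,n\}$ and $\gamma<\zeta$ in it. Step 1: if $(\alpha,\theta)\in\St_\P^\gamma$, then multiplying the stable estimate of $\Phi_\gamma$ by $\big(\mu(t)/\mu(s)\big)^{\gamma-\zeta}$ (which is $\leq 1$ for $t\geq s$) yields the stable estimate of $\Phi_\zeta$ with exponent $\alpha+\gamma-\zeta=\alpha-(\zeta-\gamma)$ and the same $\theta$; since $\alpha-(\zeta-\gamma)<0$ and $\alpha-(\zeta-\gamma)+\theta<0$, and since the $\zeta$-shifted system already admits an $\NmuD$ on the gap (with the same $\P$) from which one borrows an unstable estimate, we get $(\alpha-(\zeta-\gamma),\theta)\in\St_\P^\zeta$. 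Taking the infimum over $\St_\P^\gamma$ gives $\st_\P^\zeta\leq\st_\P^\gamma-(\zeta-\gamma)$. Step 2: because $\St_\P^\gamma\neq\emptyset$ and every element obeys $\alpha+\theta<0$, one has $\st_\P^\gamma<0$, hence by Step 1 $\st_\P^\zeta<\gamma-\zeta$; pick $(\alpha',\theta')\in\St_\P^\zeta$ with $\alpha'+\theta'<\gamma-\zeta$ and arbitrarily close to $\st_\P^\zeta$. For $t\geq s$ now $\big(\mu(t)/\mu(s)\big)^{\zeta-\gamma}\geq 1$, so the stable estimate of $\Phi_\gamma$ holds with exponent $\alpha'+(\zeta-\gamma)$; the hypothesis $\alpha'+\theta'<\gamma-\zeta$ is exactly what makes the pair $(\alpha'+(\zeta-\gamma),\theta')$ admissible, and, borrowing an unstable estimate from the $\gamma$-shifted dichotomy, $(\alpha'+(\zeta-\gamma),\theta')\in\St_\P^\gamma$. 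Hence $\st_\P^\gamma\leq(\alpha'+\theta')+(\zeta-\gamma)$, and letting $\alpha'+\theta'\downarrow\st_\P^\zeta$ gives $\st_\P^\gamma\leq\st_\P^\zeta+(\zeta-\gamma)$. Combining the two steps, $\st_\P^\gamma-\st_\P^\zeta=\zeta-\gamma$, so $|\st_\P^\gamma-\st_\P^\zeta|=|\gamma-\zeta|$ on the gap; in particular $\st_\P$ is continuous (indeed $\gamma\mapsto\st_\P^\gamma+\gamma$ is constant on each gap). The identical computation with the unstable estimates — using $\sup$ in place of $\inf$, $\beta-\nu$ in place of $\alpha+\theta$, and $\un_\P^\gamma>0$ in place of $\st_\P^\gamma<0$ — yields $\un_\P^\gamma-\un_\P^\zeta=\zeta-\gamma$.

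The only places where care is needed are routine: (i) after each shift one must check that the new pair of exponents still satisfies the defining sign constraints of $\NmuD$, so that it truly lies in $\St_\P$ (resp.\ $\Un_\P$); and (ii) one must supply the complementary half of the dichotomy — the unstable estimate when arguing about $\St_\P$, the stable one when arguing about $\Un_\P$ — which is costless because the shifted system already possesses an $\NmuD$ on the gap with the same projector (on the unbounded gap where $\P=\Id$, resp.\ $\P=0$, there is no half to complete and the argument is shorter still). The one thing to get right in order is that the choice of $(\alpha',\theta')$ with $\alpha'+\theta'<\gamma-\zeta$ in Step 2 is only available after $\st_\P^\gamma<0$, and hence $\st_\P^\zeta<\gamma-\zeta$, has been recorded; beyond this I do not anticipate a genuine obstacle.
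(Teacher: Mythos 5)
Your argument is correct, and it is genuinely different from — and stronger than — the paper's. The paper first proves monotonicity of $\st_\P$ (Lemma~\ref{622}) and then establishes continuity indirectly by two separate contradiction arguments (one for left-continuity, one for right-continuity), each requiring careful manipulation of auxiliary small constants. You instead observe that the $\gamma$-shift acts \emph{exactly} additively on the admissible $(\alpha,\theta)$-pairs: Step~1 shows $\st_\P^\zeta\leq\st_\P^\gamma-(\zeta-\gamma)$ via the trivial estimate $(\mu(t)/\mu(s))^{\gamma-\zeta}\leq 1$ for $t\geq s$, and Step~2 reverses the shift on pairs with $\alpha'+\theta'<\gamma-\zeta$ (a non-vacuous restriction precisely because Step~1 already forces $\st_\P^\zeta<\gamma-\zeta$), giving $\st_\P^\gamma\leq\st_\P^\zeta+(\zeta-\gamma)$; together these yield the exact affine law $\st_\P^\zeta-\st_\P^\gamma=-(\zeta-\gamma)$, i.e.\ $\gamma\mapsto\st_\P^\gamma+\gamma$ is locally constant on each gap. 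All the bookkeeping (sign constraints, borrowing the complementary half of the dichotomy from the $\NmuD$ that the shifted system already possesses, degenerate projectors on the unbounded gaps) is handled correctly. This buys considerably more than the statement asks for: it subsumes Lemma~\ref{622}, makes Theorem~\ref{620} almost immediate (one only needs to identify the constant on each gap as the adjacent spectral endpoint), and — combined with Theorem~\ref{620} — it resolves, with equality, the conjecture stated at the end of the paper, namely $|\st_\P^\gamma|=|\gamma-b_i|$ and $\un_\P^\gamma=|a_{i+1}-\gamma|$ on every bounded spectral gap. In short, this route is both more elementary (no $\epsilon$-chasing, no case split) and more informative than the one in the paper.
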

	
	\begin{proof}
		
		We will prove $\st_\P$ is continuous and $\un_\P$ will follow analogously.

		As $\st_\P$ is a decreasing function, any discontinuity is a jump discontinuity. Let $\gamma\in (b_i,a_{i+1})$  and suppose $\st_\P$ is not left-continuous at $\gamma$. Then, there is some $\epsilon_1>0$ such that $\st_\P^\zeta>\st_\P^\gamma+\epsilon_1$ for every $\zeta<\gamma$. Chose $\epsilon_2>0$ with $\epsilon_2<\epsilon_1/3$. Chose now $(\alpha^\gamma,\theta^\gamma)\in \St_\P^\gamma$ such that $\st_\P^\gamma<\alpha^\gamma+\theta^\gamma<\st_\P^\gamma+\epsilon_2$. Then, there exists $K>0$ such that
		\begin{equation*} %\label{610}
		\|\Phi_\gamma(t,s)\P(s)\|\leq K\left(\frac{\mu(t)}{\mu(s)}\right)^{\alpha^\gamma}\mu(s)^{\sgn(s)\theta^\gamma},\quad \forall\,t\geq s.
		\end{equation*}

		Chose now $\epsilon_3>0$ with $\epsilon_3<\min\{-(\alpha^\gamma+\theta^\gamma),\epsilon_1/3\}$. For $\zeta\in (\gamma-\epsilon_3,\gamma)$ we have
		\begin{align*}
		\|\Phi_\zeta(t,s)\P(s)\|&\leq K\left(\frac{\mu(t)}{\mu(s)}\right)^{\alpha^\gamma+\gamma-\zeta}\mu(s)^{\sgn(s)\theta^\gamma}\\
		&\leq K\left(\frac{\mu(t)}{\mu(s)}\right)^{\alpha^\gamma+\epsilon_3}\mu(s)^{\sgn(s)\theta^\gamma} ,\quad \forall\,t\geq s.
		\end{align*}
		In addition, since $\alpha^\gamma+\epsilon_3+\theta^\gamma<0$, we infer that  $(\alpha^\gamma+\epsilon_3,\theta^\gamma)\in \St_\P^\zeta$. However, we also have the estimation
		\[\st_\P^\zeta\leq \alpha^\gamma+\epsilon_3+\theta^\gamma<\st_\P^\gamma+\epsilon_3+\epsilon_2<\st_\P^\gamma+\epsilon_1<\st_\P^\zeta,\] 
		which is evidently contradictory. Thus, $\st_\P$ must be left-continuous.

		Suppose now that $\st_\P$ is not right-continuous at $\gamma$. Then, there is some $\epsilon_4>0$ such that $\st_\P^\zeta<\st_\P^\gamma-\epsilon_4$ for every $\zeta>\gamma$. Chose $\epsilon_5>0$ with $\epsilon_5<\epsilon_4/3$, and $\zeta>\gamma$ with $\zeta-\gamma<\epsilon_5$. Chose now $\epsilon_6>0$ with $\epsilon_6<\epsilon_4/3$, and a pair  $(\alpha^\zeta,\theta^\zeta)\in \St_\P^\zeta$ such that $\st_\P^\zeta<\alpha^\zeta+\theta^\zeta<\st_\P^\zeta+\epsilon_6$. This implies that there exists $K>0$ such that
		\begin{equation*} %\label{610}
		\|\Phi_\zeta(t,s)\P(s)\|\leq K\left(\frac{\mu(t)}{\mu(s)}\right)^{\alpha^\zeta}\mu(s)^{\sgn(s)\theta^\zeta},\quad \forall\,t\geq s.
		\end{equation*}
		This leads us to deduce
		\begin{align}\label{624}
		\|\Phi_\gamma(t,s)\P(s)\|&\leq K\left(\frac{\mu(t)}{\mu(s)}\right)^{\alpha^\zeta+\zeta-\gamma}\mu(s)^{\sgn(s)\theta^\zeta}\nonumber\\
		&\leq K\left(\frac{\mu(t)}{\mu(s)}\right)^{\alpha^\zeta+\epsilon_5}\mu(s)^{\sgn(s)\theta^\zeta} ,\quad \forall\,t\geq s.
		\end{align}
		
		Note that
		\[\alpha^\zeta+\theta^\zeta+\epsilon_5<\st_\P^\zeta+\epsilon_6+\epsilon_5<\st_\P^\gamma-\epsilon_4+\epsilon_6+\epsilon_5<\st_\P^\gamma\leq 0.\]
		
		First, this implies $\alpha^\zeta+\theta^\zeta+\epsilon_5<0$, hence \eqref{624} implies $(\alpha^\zeta+\epsilon_5,\theta^\zeta)\in \St_\P^\gamma$. Nevertheless, we also have $\alpha^\zeta+\theta^\zeta+\epsilon_5<\st_\P^\gamma$, which contradicts the minimality of $\st_\P^\gamma$. Thus, $\st_\P$ must be right continuous.
	\end{proof}

	The previous results imply that both $\st_\P$ and $\un_\P$ have lateral limits on every bounded spectral gap. Similarly, in the case of the spectral gaps $(-\infty,a_1)$ and $(b_n,+\infty)$, although the maps might be unbounded, the lateral limits towards $a_1$ and $b_n$ still exist.

	Next, we present the principal property of functions $\st_\P$ and $\un_\P$, which will be crucial to establish the novelty and the main result of the subsequent section.

	\begin{theorem}\label{620}
		Assume the system \eqref{601} admits $(\Nmu,\epsilon)$-growth.    For every bounded spectral gap $(b_i,a_{i+1})$, we have
		$$\lim_{\gamma\to b_i^+}\st_\P^\gamma=\lim_{\gamma\to a_{i+1}^-}\un_\P^\gamma=0.$$
		
		Moreover, on the unbounded spectral gaps we have $\displaystyle\lim_{\gamma\to b_n^+}\st_\Id^\gamma=\lim_{\gamma\to a_1^-}\un_0^\gamma=0$.
	\end{theorem}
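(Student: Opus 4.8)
The plan is to analyze the behavior of $\st_\P^\gamma$ as $\gamma$ approaches the left endpoint $b_i$ of a bounded spectral gap, and dually for $\un_\P^\gamma$ as $\gamma$ approaches the right endpoint $a_{i+1}$. By Lemma~\ref{622}, $\st_\P$ is decreasing on $(b_i,a_{i+1})$, so $\lim_{\gamma\to b_i^+}\st_\P^\gamma$ exists in $(-\infty,+\infty]$; combined with the uniform boundedness estimates obtained just before Definition~\ref{optimalmaps} (which show $\alpha\geq -(a+\epsilon+\max\{|b_i|,|a_{i+1}|\})$ and $\theta\geq \widehat{\theta}_\P$ on the whole gap, hence $\st_\P^\gamma$ is bounded below), the limit is actually a real number $L\geq \st_\P^\gamma$ for all $\gamma$; moreover since $\alpha+\theta<0$ always, $L\leq 0$. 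So it remains to rule out $L<0$.

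Suppose for contradiction that $L<0$. Then for every $\gamma\in(b_i,a_{i+1})$ we have $\st_\P^\gamma\leq L<0$, so we can choose $(\alpha^\gamma,\theta^\gamma)\in\St_\P^\gamma$ with $\alpha^\gamma+\theta^\gamma$ close to $\st_\P^\gamma$, in particular $\alpha^\gamma+\theta^\gamma<L/2<0$. Using the shift identity $\Phi_{\gamma'}(t,s)=\Phi_\gamma(t,s)\left(\frac{\mu(t)}{\mu(s)}\right)^{\gamma-\gamma'}$, the stable bound for $\Phi_\gamma$ with exponent $\alpha^\gamma$ transforms, for any $\gamma'<\gamma$, into a stable bound for $\Phi_{\gamma'}$ with exponent $\alpha^\gamma+(\gamma-\gamma')$ (and $t\mapsto\mu(t)$ is increasing so the extra factor is $\leq 1$ on the relevant range only when the exponent is of the right sign; here one must be careful, so I would instead push $\gamma'$ downward toward $b_i$ and combine with the fact that $b_i$ is a spectral point). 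The key point is that if $\st_\P^\gamma$ stayed bounded away from $0$ as $\gamma\to b_i^+$, then for $\gamma$ slightly above $b_i$ the $\gamma$-shifted system would admit $\NmuD$ with a stable ratio exponent $\alpha+\theta$ bounded above by $L/2<0$; by shifting by an amount $\delta<|L|/2$ one would produce, for $\gamma-\delta<b_i$, a system still admitting $\NmuD$ with the same projector and with $\alpha+\theta<0$ preserved — contradicting that $b_i\in\Si_\NmuD(A)$ (equivalently, that points just below $b_i$ that are spectral cannot have the $\gamma$-shifted system dichotomic with this projector). The dual argument with $\un_\P$ and the right endpoint $a_{i+1}$ is symmetric, using $\un_\P^\gamma\to L'\leq 0$ and ruling out $L'<0$, since $\un_\P^\gamma=\beta-\nu>0$ forces $L'\geq 0$, so in fact for $\un$ the limit from the right is $\geq 0$ and one rules out $>0$.

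For the unbounded gaps: on $(b_n,+\infty)$ the projector is $\Id$, so there is no unstable part and $\st_\Id^\gamma$ is defined for all $\gamma>b_n$; the same contradiction argument applies at the finite endpoint $b_n$, which is a spectral point, giving $\lim_{\gamma\to b_n^+}\st_\Id^\gamma=0$. Symmetrically, on $(-\infty,a_1)$ the projector is $0$, $\un_0^\gamma$ is defined for all $\gamma<a_1$, and the endpoint $a_1$ is spectral, yielding $\lim_{\gamma\to a_1^-}\un_0^\gamma=0$. The existence of these one-sided limits was already noted after Proposition~\ref{625}, so only the identification of the limit value needs the contradiction argument.

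The main obstacle I expect is making the contradiction step rigorous: one must show that having $\NmuD$ for the $\gamma$-shifted system with optimal stable ratio $\st_\P^\gamma\leq -c<0$ uniformly for $\gamma$ near $b_i^+$ genuinely forces $\NmuD$ (with the correct invariant projector, and with the strict inequalities $\alpha+\theta<0$, $\beta-\nu>0$ intact) for some $\gamma'<b_i$. The delicate points are: (i) tracking the projector — here one uses the cited fact from \cite{Silva} that the projector is locally constant across a gap, extended across the endpoint via an open-ness/robustness argument; (ii) the extra nonuniform factor $\mu(s)^{\sgn(s)\theta}$ is unaffected by the shift, so the $\theta,\nu$ parameters carry over unchanged, which is what makes the ratio $\alpha+\theta$ (rather than $\alpha$ alone) the right quantity; and (iii) ensuring the constant $K$ does not blow up, which follows because we only shift by a fixed small $\delta$. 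Once these are in place, the inequality $\st_\P^\gamma\leq -c$ survives the shift and contradicts $b_i\in\Si_\NmuD(A)$.
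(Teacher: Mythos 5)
Your strategy matches the paper's: use monotonicity (Lemma~\ref{622}) and continuity (Proposition~\ref{625}) to get a nonpositive one-sided limit $L$, then rule out $L<0$ by shifting a point $\gamma$ near $b_i$ downward and checking that the stable ratio $\alpha+\theta$ stays negative and $\beta-\nu$ stays positive (indeed improves), so that the shifted system admits $\NmuD$ at a spectral point. Your observations that the projector, the nonuniform exponents $\theta,\nu$, and the constant $K$ are untouched by the shift, and that the unstable part only improves when shifting down, are all correct and are exactly what the paper uses.

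The one real gap is \emph{where} you shift. You propose to shift by a fixed $\delta<\abs{L}/2$ and land at $\gamma-\delta<b_i$, then invoke a contradiction with $b_i\in\Si_\NmuD(A)$. But $\NmuD$ at $\gamma-\delta$ contradicts the spectrum only if $\gamma-\delta$ itself lies in $[a_i,b_i]$; if the spectral interval is a single point ($a_i=b_i$) or short, a fixed overshoot $\delta$ may land in the adjacent gap, where $\NmuD$ is expected and no contradiction arises. The paper avoids this by shifting \emph{exactly} to $b_i$: it builds an auxiliary near-optimal selection $\widetilde{\st}_\P^\gamma=\alpha^\gamma+\theta^\gamma\in(\st_\P^\gamma,\st_\P^\gamma+\epsilon_1)$, sets $\widetilde M:=\liminf_{\gamma\to b_i^+}\widetilde{\st}_\P^\gamma<0$, and chooses $\zeta$ so close to $b_i$ that simultaneously $\zeta-b_i<-\widetilde M/4$ and $\widetilde{\st}_\P^\zeta$ is within $-\widetilde M/4$ of $\widetilde M$; then $\alpha^\zeta+(\zeta-b_i)+\theta^\zeta<\widetilde M/2<0$, so the $b_i$-shifted system admits $\NmuD$, directly contradicting $b_i\in\Si_\NmuD(A)$. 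Replace your overshoot with this exact-endpoint shift, with the shift amount $\zeta-b_i$ quantified against the residual negativity, and the argument closes.
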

	
	\begin{proof}
		We will prove $\lim_{\gamma \to b_i^+}\st_\P^\gamma=0$ for a bounded spectral gap (the other limits follow similarly). Since the map $\st_\P$ is decreasing, negative and continuous, it follows that the lateral limits exist and are non-positive .

		Suppose $\lim_{\gamma\to b_i^+}\st_\P^\gamma=M<0$. We chose $\epsilon_1>0$ such that $M+\epsilon_1<0$. Moreover, for each $\gamma\in (b_i,a_{i+1})$ we chose $(\alpha^\gamma,\theta^\gamma)\in \St_\P^\gamma$ such that
		\[\st_\P^\gamma<\alpha^\gamma+\theta^\gamma<\st_\P^\gamma+\epsilon_1.\]
		
		Note that this defines a new function $\widetilde{\st}_\P:(b_i,a_{i+1})\to \mathbb{R}$, $\gamma\mapsto \widetilde{\st}_\P^\gamma:=\alpha^\gamma+\theta^\gamma$ which verifies $\st_\P^\gamma<\widetilde{\st}_\P^\gamma<\st_\P^\gamma+\epsilon_1$. Then
		\[\liminf_{\gamma\to b_i^+}\widetilde{\st}_\P^\gamma:=\widetilde{M}\leq M+\epsilon_1<0.\]
		
		Let us consider $\epsilon_2\in (0,-\widetilde{M}/4)$. The previous lower limit implies that there is $\delta_1=\delta_1(\epsilon_2)>0$ such that for every $\delta\in (0,\delta_1)$ there is some $\gamma\in (b_i,b_i+\delta)$ such that $\widetilde{\st}_\P^\gamma\in (\widetilde{M}-\epsilon_2,\widetilde{M}+\epsilon_2)$.

		Chose now $\delta_2=\min\{\delta_1(\epsilon_2),-\widetilde{M}/4\}$. Then, there is some $\zeta\in(b_i,b_i+\delta_2)$ such that $\widetilde{\st}_\P^\zeta\in (\widetilde{M}-\epsilon_2,\widetilde{M}+\epsilon_2)$.

		As by definition $(\alpha^\zeta,\theta^\zeta)\in \St_\P^\zeta$, then the $\zeta$-shifted system admits $\NmuD$ with parameters $(\P;\alpha^\zeta,\beta,\theta^\zeta,\nu)$, for some $\beta$ and $\nu$. In other words
		\begin{equation*}
		\left\{ \begin{array}{lc}
		\|\Phi_\zeta(t,s)\P(s)\|\leq K
		\left( \mathlarger{\frac{\mu(t)}{\mu(s)}}\right)^{\alpha^\zeta}\mu(s)^{\sgn(s)\theta^\zeta}, &\forall\,\, t \geq s \\
		\\ \|\Phi_\zeta(t,s)[\Id-\P(s)]\|\leq K
		\left( \mathlarger{\frac{\mu(t)}{\mu(s)}}\right)^{\beta}\mu(s)^{\sgn(s)\nu}, &\forall\,\,  t\leq s.
		\end{array}
		\right.
		\end{equation*}
		
		Moreover, we have the identity
		\[\Phi_{b_i}(t,s)=\Phi_\zeta(t,s)\left(\frac{\mu(t)}{\mu(s)}\right)^{\zeta-b_i},\qquad \text{ for all $t,s\in\mathbb{R}$},
		\]
		from which we obtain
		\begin{equation*}
		\left\{ \begin{array}{lc}
		\|\Phi_{b_i}(t,s)\P(s)\|\leq K
		\left( \mathlarger{\frac{\mu(t)}{\mu(s)}}\right)^{\alpha^\zeta+(\zeta-b_i)}\mu(s)^{\sgn(s)\theta^\zeta}, &\forall\,\, t \geq s \\
		\\ \|\Phi_{b_i}(t,s)[\Id-\P(s)]\|\leq K
		\left( \mathlarger{\frac{\mu(t)}{\mu(s)}}\right)^{\beta+(\zeta-b_i)}\mu(s)^{\sgn(s)\nu}, &\forall\,\,  t\leq s,
		\end{array}
		\right.
		\end{equation*}
		where clearly $\beta+(\zeta-b_i)-\nu>\beta-\nu>0$, and 
		\[
		\alpha^\zeta+(\zeta-b_i)+\theta^\zeta<\widetilde{\st}_\P^\zeta+\delta_2<\widetilde{M}/2<0.
		\]
		Therefore, the $b_i$-shifted system admits $\NmuD$. This is a contradiction due to $b_i$ belongs to $\Sigma_\NmuD(A)$. In consequence, we infer that $M=0$. 
	\end{proof}
	
	\begin{remark}\label{639}
		{\rm
			Since the function $\gamma \mapsto \st_\P^\gamma$ is decreasing, negative and continuous on every spectral gap, if there is some number $b$ such that $\lim_{\gamma\to b^+}\st_\P^\gamma=0$, then $b$ must be a boundary point of said spectral gap, {\it i.e.} $b$ is exactly one of the $b_i$, with $i=1,\dots,n$, of the spectrum, as tailored on Theorem \ref{602}. Similarly, if $\lim_{\gamma\to a^-}\un_\P^\gamma=0$, for some $a$, then $a$ must be a boundary point in the spectrum. Observe that this assertion can be regarded as the reciprocal to the previous theorem.    
		}
	\end{remark}
	
	\begin{corollary}\label{621}
		For every $\varepsilon>0$, there exists $\delta>0$ such that 
		\[
		\gamma\in (b_i,b_i+\delta) \Longrightarrow |\st_\P^\gamma|<\varepsilon,
		\]
		and 
		\[
		\gamma\in (a_{i}-\delta,a_{i}) \Longrightarrow \un_\P^\gamma<\varepsilon, 
		\]
		for all $i=1,\dots,n$.
	\end{corollary}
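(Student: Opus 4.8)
The plan is to read the statement off directly from Theorem~\ref{620} by expanding the definition of one-sided limit and then taking a finite minimum over the finitely many spectral endpoints. Recall from Theorem~\ref{602} that the spectrum has the form $\bigcup_{i=1}^n[a_i,b_i]$ and that the spectral gaps are the intervals $(b_i,a_{i+1})$ with $b_0=-\infty$ and $a_{n+1}=+\infty$; in particular $b_i$ is the left endpoint of the gap $(b_i,a_{i+1})$ for $i=1,\dots,n$ (bounded when $i<n$, and equal to $(b_n,+\infty)$ when $i=n$), while $a_i$ is the right endpoint of the gap $(b_{i-1},a_i)$ for $i=1,\dots,n$ (bounded when $i>1$, and equal to $(-\infty,a_1)$ when $i=1$).

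First I would invoke Theorem~\ref{620} at each relevant endpoint. For $i=1,\dots,n-1$ it yields $\lim_{\gamma\to b_i^+}\st_\P^\gamma=0$ on the bounded gap $(b_i,a_{i+1})$, and for $i=n$ it yields $\lim_{\gamma\to b_n^+}\st_\Id^\gamma=0$ on the unbounded gap $(b_n,+\infty)$. Likewise, for $i=2,\dots,n$ it yields $\lim_{\gamma\to a_i^-}\un_\P^\gamma=0$ on the bounded gap $(b_{i-1},a_i)$, and for $i=1$ it yields $\lim_{\gamma\to a_1^-}\un_0^\gamma=0$ on the unbounded gap $(-\infty,a_1)$. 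Hence all $2n$ one-sided limits in question equal $0$.

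Now fix $\varepsilon>0$. For each of the $n$ left-endpoint limits, the definition of the limit produces a number $\delta_i>0$ such that $\gamma\in(b_i,b_i+\delta_i)\Rightarrow|\st_\P^\gamma|<\varepsilon$; for each of the $n$ right-endpoint limits it produces $\delta_i'>0$ such that $\gamma\in(a_i-\delta_i',a_i)\Rightarrow|\un_\P^\gamma|<\varepsilon$, and since $\un_\P^\gamma>0$ (because $\beta-\nu>0$ for every admissible pair) this last inequality is exactly $\un_\P^\gamma<\varepsilon$. Setting $\delta:=\min\{\delta_1,\dots,\delta_n,\delta_1',\dots,\delta_n'\}$, which is strictly positive as the minimum of finitely many positive numbers, gives both implications simultaneously for every $i=1,\dots,n$.

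There is no genuine obstacle here: Theorem~\ref{620} carries all the analytic weight, and the only points needing (minor) care are the bookkeeping that matches each spectral endpoint to the spectral gap on whose closure the relevant optimal-ratio function is defined, the separate treatment of the two unbounded gaps, and the elementary remark that a finite minimum of positive thresholds is again positive.
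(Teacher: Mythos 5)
Your proof is correct and is exactly the natural reading-off of Corollary~\ref{621} from Theorem~\ref{620}: unfold each one-sided limit into an $\varepsilon$--$\delta$ statement, handle the two unbounded gaps with the appropriate projectors ($\Id$ for $(b_n,+\infty)$, $0$ for $(-\infty,a_1)$), note that $\un_\P^\gamma>0$ so $|\un_\P^\gamma|<\varepsilon$ is the same as $\un_\P^\gamma<\varepsilon$, and take the minimum of the finitely many resulting thresholds. The paper gives no separate proof for this corollary, and your bookkeeping (which endpoint sits on which gap, and which projector is in force there) is precisely what makes the deduction go through.
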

	
	\begin{remark}
		{\rm
			For systems that present uniform $\mu$-dichotomy, the previous result implies that the constants $\alpha^\gamma$ or $\beta^\gamma$ tend to zero when $\gamma$ approaches the spectrum. 
		}
	\end{remark}

	To conclude this section, we address a property of the optimal ratio maps on the unbounded spectral gaps.
	
	\begin{lemma}\label{629}
		Assume the system \eqref{613} has $(\Nmu,\epsilon)$-growth with constants $\widehat{K}\geq 1$ and $a\geq 0$. Then
		\[\lim_{\gamma\to -\infty}\un_0^\gamma=-\lim_{\gamma\to +\infty}\st_\Id^\gamma=+\infty.\]
		
	\end{lemma}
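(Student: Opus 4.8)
The plan is to prove the two limits separately; each is a direct consequence of the $(\Nmu,\epsilon)$-growth estimate together with the identity $\Phi_\gamma(t,s)=\Phi(t,s)\big(\mu(t)/\mu(s)\big)^{-\gamma}$, and neither requires the finer results of Section~\ref{Sect2}. Recall first that, since the system has $(\Nmu,\epsilon)$-growth, both unbounded spectral gaps $(-\infty,a_1)$ and $(b_n,+\infty)$ are contained in the resolvent set, with associated projectors $0$ and $\Id$ respectively (Theorem~\ref{602}); as $a_1$ and $b_n$ are finite, $\st_\Id^\gamma$ is defined for all $\gamma>b_n$ and $\un_0^\gamma$ for all $\gamma<a_1$.

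To show $\lim_{\gamma\to+\infty}\st_\Id^\gamma=-\infty$, I would restrict the $(\Nmu,\epsilon)$-growth bound to $t\geq s$, where it reads $\|\Phi(t,s)\|\leq\widehat{K}\big(\mu(t)/\mu(s)\big)^{a}\mu(s)^{\sgn(s)\epsilon}$, and multiply by $\big(\mu(t)/\mu(s)\big)^{-\gamma}$ to get
\[
\|\Phi_\gamma(t,s)\|\leq\widehat{K}\Big(\frac{\mu(t)}{\mu(s)}\Big)^{a-\gamma}\mu(s)^{\sgn(s)\epsilon},\qquad\forall\,t\geq s.
\]
For every $\gamma>\max\{b_n,\,a+\epsilon\}$ the exponents $\alpha:=a-\gamma$ and $\theta:=\epsilon$ satisfy $\alpha<0$, $\theta\geq 0$ and $\alpha+\theta=a+\epsilon-\gamma<0$, so the $\gamma$-shifted system admits $\NmuD$ with parameters $(\Id;a-\gamma,*,\epsilon,*)$; hence $(a-\gamma,\epsilon)\in\St_\Id^\gamma$ and consequently $\st_\Id^\gamma\leq a+\epsilon-\gamma$. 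Since the right-hand side tends to $-\infty$ and $\lim_{\gamma\to+\infty}\st_\Id^\gamma$ exists by monotonicity (Lemma~\ref{622}), it must equal $-\infty$; that is, $-\lim_{\gamma\to+\infty}\st_\Id^\gamma=+\infty$.

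The remaining limit is entirely symmetric. Restricting $(\Nmu,\epsilon)$-growth to $t\leq s$ gives $\|\Phi(t,s)\|\leq\widehat{K}\big(\mu(t)/\mu(s)\big)^{-a}\mu(s)^{\sgn(s)\epsilon}$, and multiplying by $\big(\mu(t)/\mu(s)\big)^{-\gamma}$ yields $\|\Phi_\gamma(t,s)\|\leq\widehat{K}\big(\mu(t)/\mu(s)\big)^{-a-\gamma}\mu(s)^{\sgn(s)\epsilon}$ for all $t\leq s$. For every $\gamma<\min\{a_1,\,-a-\epsilon\}$ the exponents $\beta:=-a-\gamma$ and $\nu:=\epsilon$ satisfy $\beta>0$, $\nu\geq 0$ and $\beta-\nu=-a-\epsilon-\gamma>0$, so the $\gamma$-shifted system admits $\NmuD$ with parameters $(0;*,-a-\gamma,*,\epsilon)$, giving $(-a-\gamma,\epsilon)\in\Un_0^\gamma$ and hence $\un_0^\gamma\geq -a-\epsilon-\gamma$. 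Letting $\gamma\to-\infty$, and again using monotonicity of $\un_0$ for the existence of the limit, yields $\lim_{\gamma\to-\infty}\un_0^\gamma=+\infty$.

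The only delicate point is the sign bookkeeping: one must verify that the pairs of exponents produced above really lie in the admissible regions — in particular the strict inequalities $\alpha+\theta<0$ and $\beta-\nu>0$ — and that $|\gamma|$ is taken large enough to sit simultaneously in the relevant unbounded spectral gap and beyond $a+\epsilon$; both are immediate once $a_1,b_n$ are known to be finite. Apart from this routine care, there is no substantial obstacle.
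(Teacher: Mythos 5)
Your proof is correct and is essentially the same argument as the paper's: both derive, directly from the $(\Nmu,\epsilon)$-growth estimate applied to the $\gamma$-shifted operator, that $(a-\gamma,\epsilon)\in\St_\Id^\gamma$ and hence $\st_\Id^\gamma\leq a+\epsilon-\gamma$ for $\gamma$ large. The only cosmetic difference is that the paper wraps this bound in a proof by contradiction ("suppose $\st_\Id$ is bounded below by $N$..."), whereas you state the explicit upper bound and let it tend to $-\infty$; the content is identical.
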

	
	\begin{proof}
		We will prove $\lim_{\gamma\to +\infty}\st_\Id^\gamma=-\infty$ and the other limit will follow similarly. As $\st_\Id$ is decreasing and continuous, it is enough to prove that the map cannot be bounded. Suppose that $\st_\Id$ is bounded, {\it i.e.} there is some $N<0$ such that $N<\st_\Id^\gamma<0$ for every $\gamma>b_n$. Consider $t\geq s$. By the assumption of nonuniform $\mu$-bounded growth, we have
		\[
		\|\Phi(t,s)\|\leq\widehat{K}\left(\frac{\mu(t)}{\mu(s)}\right)^{a}\mu(s)^{\sgn(s)\epsilon}.
		\]
		Thus, for $\gamma$ large enough, the estimation
		\[\|\Phi_\gamma(t,s)\|\leq\widehat{K}\left(\frac{\mu(t)}{\mu(s)}\right)^{a-\gamma}\mu(s)^{\sgn(s)\epsilon},\]
		defines a dichotomy with parameters $(\Id;a-\gamma,*,\epsilon,*)$. Moreover, for $\gamma>a+\epsilon-N$, we have $\st_\Id^\gamma\leq a-\gamma+\epsilon<N$, which is a contradiction.
	\end{proof}

	\section{Kinematic similarity and spectra noninvariance}
	
	In this section, we aim to clarify certain inconsistencies that have emerged from earlier studies regarding the preservation of the nonuniform $\mu$-dichotomy spectra for systems which are nonuniformly kinematically similar.
	
	Let us commence by recalling the concept of kinematic similarity frequently used in the nonuniform context, see e.g. \cite{Chu,Huerta,Silva,Xiang}.
	
	\begin{definition}\label{Kinsim}
		Let $\mu\colon\mathbb{R}\to(0,+\infty)$ be a growth rate and let $\varepsilon\geq0$ be given. The systems \eqref{613} and \eqref{601} are \textbf{nonuniformly $(\mu,\varepsilon)$-kinematically similar}, if there is a constant $M_\varepsilon>0$, and a differentiable matrix valued function $S\colon\mathbb{R}\to GL_d(\mathbb{R})$ satisfying the following properties:
		\begin{itemize}
			\item[(i)] $\|S(t)\|\leq M_\varepsilon\mu(t)^{\sgn(t)\varepsilon}$, for all $t\in\mathbb{R}$.
			
			\item[(ii)]$\|S(t)^{-1}\|\leq M_\varepsilon\mu(t)^{\sgn(t)\varepsilon}$, for all $t\in\mathbb{R}$.
			
			\item[(iii)]If $t\mapsto y(t)$ is a solution for \eqref{613}, then $t\mapsto z(t):=S(t)^{-1}y(t)$ is a solution for \eqref{601}.
			
			\item[(iv)] If $t\mapsto z(t)$ is a solution of \eqref{601}, then $t\mapsto y(t):=S(t)z(t)$ is a solution of \eqref{613}.
			
		\end{itemize}

		In the particular case that $\varepsilon=0$, it is said that systems \eqref{613} and \eqref{601}  are \textbf{uniformly kinematically similar}.
		
	\end{definition}
	
	\begin{remark}\rm 
		Every matrix valued function $S\colon\mathbb{R}\to GL_d(\mathbb{R})$ satisfying (i) and (ii) for some $\varepsilon>0$ is called a \emph{nonuniform Lyapunov matrix function} with respect to $\mu$, and the change of variables $y(t)=S(t)z(t)$ is called a \emph{nonuniform Lyapunov transformation} with respect to $\mu$. As usual, in the case $\varepsilon=0$ the term ``nonuniform" is replaced with ``uniform". 
	\end{remark}

	The next result, which is a nonuniform version borrowed from \cite[Lemma 2.1]{Siegmund2}, characterizes the nonuniform $(\mu,\varepsilon)$-kinematic similarity through special cases of nonuniform Lyapunov matrix functions.
	
	\begin{lemma}\label{604}
		{\rm (\cite[Lemma 13]{Silva})} Let $S(t)$ be a nonuniform Lyapunov matrix function with respect to $\mu$. Then, the following statements are equivalent:
		\begin{itemize}
			\item [(a)] $S(t)$ verifies {\rm(iii)} and {\rm (iv)} from Definition \ref{Kinsim}.
			\item [(b)] The identity $\Phi(t,s)S(s)=S(t)\Psi(t,s)$ holds, for all $t,s\in \mathbb{R}$, where $\Phi$ and $\Psi$ are the evolution operators of \eqref{613} and \eqref{601}, respectively.
			\item [(c)] $S(t)$ is a solution of $S'=A(t)S-SB(t)$.
		\end{itemize}
	\end{lemma}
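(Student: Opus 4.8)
The plan is to establish the cyclic chain of implications $(a)\Rightarrow(b)\Rightarrow(c)\Rightarrow(a)$, relying on three elementary facts: $S(t)$ is invertible for every $t$ since $S(t)\in GL_d(\mathbb{R})$; the initial value problems associated to \eqref{613} and \eqref{601} have unique solutions; and $S$ is differentiable, so the product rule applies (with the differential equations read in the Carathéodory sense, as $A$ and $B$ are merely locally integrable). For $(a)\Rightarrow(b)$ I would fix $s\in\mathbb{R}$ and $x_0\in\mathbb{R}^d$ and consider the solution $y(t):=\Phi(t,s)x_0$ of \eqref{613}. By property (iii), $t\mapsto z(t):=S(t)^{-1}y(t)=S(t)^{-1}\Phi(t,s)x_0$ is a solution of \eqref{601} with $z(s)=S(s)^{-1}x_0$, so uniqueness of solutions of \eqref{601} forces $z(t)=\Psi(t,s)S(s)^{-1}x_0$ for all $t$. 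Since $x_0$ is arbitrary this gives $S(t)^{-1}\Phi(t,s)=\Psi(t,s)S(s)^{-1}$, and multiplying on the left by $S(t)$ and on the right by $S(s)$ yields the identity in (b). (Property (iv) produces the same identity symmetrically, so either one of (iii), (iv) already suffices here; both are recovered in the last implication.)

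For $(b)\Rightarrow(c)$ I would differentiate $\Phi(t,s)S(s)=S(t)\Psi(t,s)$ with respect to $t$, using $\partial_t\Phi(t,s)=A(t)\Phi(t,s)$ and $\partial_t\Psi(t,s)=B(t)\Psi(t,s)$, which hold for almost every $t$. This gives $A(t)\Phi(t,s)S(s)=S'(t)\Psi(t,s)+S(t)B(t)\Psi(t,s)$; substituting the identity (b) into the left-hand side and cancelling the invertible factor $\Psi(t,s)$ on the right yields $A(t)S(t)=S'(t)+S(t)B(t)$, i.e. $S'=A(t)S-SB(t)$. For $(c)\Rightarrow(a)$: if $z$ solves \eqref{601} and $y:=Sz$, then $y'=S'z+Sz'=(A(t)S-SB(t))z+SB(t)z=A(t)Sz=A(t)y$, which is (iv); conversely, if $y$ solves \eqref{613} and $z:=S^{-1}y$, then using $(S^{-1})'=-S^{-1}S'S^{-1}$ one computes $z'=-S^{-1}S'S^{-1}y+S^{-1}A(t)y=-S^{-1}(A(t)S-SB(t))S^{-1}y+S^{-1}A(t)y=B(t)S^{-1}y=B(t)z$, which is (iii).

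The only point requiring care is the Carathéodory bookkeeping: $\Phi(\cdot,s)$ and $\Psi(\cdot,s)$ are merely absolutely continuous, so the differentiations in $(b)\Rightarrow(c)$ are valid only almost everywhere, and the conclusion $S'=A(t)S-SB(t)$ of (c) is correspondingly an almost-everywhere statement; I would note that $t\mapsto A(t)S(t)-S(t)B(t)$ is locally integrable, so this is precisely the sense in which ``$S$ is a solution of $S'=A(t)S-SB(t)$'' is intended, and the differentiable $S$ then coincides with the absolutely continuous Carathéodory solution. No genuine obstacle arises: the argument is a transcription of the classical reasoning in \cite[Lemma 2.1]{Siegmund2} to the nonuniform Lyapunov matrices, and the size conditions (i)--(ii) play no role in this equivalence.
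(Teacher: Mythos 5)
Your proof is correct and is the standard argument. The paper itself does not reprove this lemma — it cites \cite[Lemma 13]{Silva} directly, which in turn is a transcription of \cite[Lemma 2.1]{Siegmund2} — and your cyclic chain $(a)\Rightarrow(b)\Rightarrow(c)\Rightarrow(a)$, including the Carath\'eodory caveat and the observation that conditions (i)--(ii) play no role, matches that classical reasoning precisely.
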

	
	The following result is a simple but useful consequence of the previous lemma.
	
	\begin{corollary}\label{623}
		Assume the systems \eqref{613} and \eqref{601} are nonuniformly $(\mu,\varepsilon)$-kinematically similar. Then for every $\gamma\in \mathbb{R}$, the $\gamma$-shifted systems from both \eqref{613} and \eqref{601} are nonuniformly $(\mu,\varepsilon)$-kinematically similar with the same Lyapunov transformation and the same $\varepsilon$. 
	\end{corollary}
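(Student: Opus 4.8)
The plan is to take the nonuniform Lyapunov matrix function $S\colon\mathbb{R}\to GL_d(\mathbb{R})$ that witnesses the nonuniform $(\mu,\varepsilon)$-kinematic similarity between \eqref{613} and \eqref{601}, and to show that the very same $S$ witnesses the similarity of the corresponding $\gamma$-shifted systems. First I would observe that conditions (i) and (ii) of Definition~\ref{Kinsim} involve only $S$ and $S^{-1}$ together with the growth rate $\mu$; they make no reference to the coefficient matrices $A(t)$ or $B(t)$. Hence they continue to hold verbatim, with the same constant $M_\varepsilon$ and the same parameter $\varepsilon$, for the shifted systems. It then remains to verify conditions (iii) and (iv).

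For this I would invoke the equivalence in Lemma~\ref{604}, which reduces the task to checking either the intertwining identity (b) or the differential equation (c) for the shifted data. Using (c) is cleanest: by hypothesis $S$ solves $S'=A(t)S-SB(t)$. Writing $g(t):=\gamma\,\mu'(t)/\mu(t)$, the $\gamma$-shifted systems have coefficient matrices $A(t)-g(t)\Id$ and $B(t)-g(t)\Id$, so the equation in Lemma~\ref{604}(c) for a conjugacy between them reads $S'=[A(t)-g(t)\Id]S-S[B(t)-g(t)\Id]=A(t)S-SB(t)$, because the scalar terms $-g(t)S$ and $+g(t)S$ cancel. Thus the equation is literally unchanged, and $S$, being already a solution, solves it as well; by Lemma~\ref{604} again, $S$ satisfies (iii) and (iv) for the shifted systems.

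Alternatively, and perhaps more transparently, one can argue at the level of transition matrices: since $\Phi_\gamma(t,s)=\Phi(t,s)(\mu(t)/\mu(s))^{-\gamma}$ and the analogous identity holds for the shifted operator of \eqref{601}, and since the scalar factor $(\mu(t)/\mu(s))^{-\gamma}$ commutes with every matrix, the identity $\Phi(t,s)S(s)=S(t)\Psi(t,s)$ from Lemma~\ref{604}(b), applied to the original pair, immediately yields the corresponding identity for the shifted operators upon multiplying both sides by $(\mu(t)/\mu(s))^{-\gamma}$. Either route gives the conclusion, and in both the constant $M_\varepsilon$, the transformation $S$, and the parameter $\varepsilon$ are the same ones as for the original pair.

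There is no substantial obstacle here: the content of the statement is precisely that a shift by a scalar multiple of the identity is invisible to a kinematic conjugacy, because scalars commute and the Lyapunov bounds (i)--(ii) are intrinsic to $S$. The only mild care needed is bookkeeping, namely confirming that ``the same $\varepsilon$'' and ``the same transformation'' are legitimate claims; this follows because neither (i)--(ii) nor the reduced form of Lemma~\ref{604}(c) sees $\gamma$ at all. One may note in passing that this is consistent with Remark~\ref{617}, where a scalar shift likewise only perturbs a growth exponent and leaves the nonuniformity parameter untouched.
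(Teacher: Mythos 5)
Your proposal is correct and your second route (multiplying the intertwining identity $\Phi(t,s)S(s)=S(t)\Psi(t,s)$ by the scalar $(\mu(t)/\mu(s))^{-\gamma}$) is exactly the paper's argument via Lemma~\ref{604}(b). Your first route via the differential equation in Lemma~\ref{604}(c), where the scalar terms $\mp\gamma\,\mu'(t)/\mu(t)\,S$ cancel, is an equivalent minor variation, so the approach is essentially the same.
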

	
	\begin{proof}
		This follows from the second equivalence on Lemma \ref{604}, as both $\Phi_\gamma$ and $\Psi_\gamma$ are obtained by multiplying $\Phi$ and $\Psi$ by the same scalar evolution.
	\end{proof}
	
	In the next, we slightly reformulate a result established in \cite[Lemma 14]{Silva}. We include some extra details related with the parameters of the $\NmuD$.
	
	\begin{lemma}\label{605}
		Assume the system \eqref{613} admits $\NmuD$ with parameters $(\P;\alpha,\beta,\theta,\nu)$, where $\P$ is neither $0$ nor $\Id$, and is nonuniformly $(\mu,\varepsilon)$-kinematically similar to the system \eqref{601} with a Lyapunov function $S$, and
		\begin{equation}\label{603}
		\varepsilon< \frac{1}{3}\min\{-(\alpha+\theta),\beta-\nu\}.  
		\end{equation}
		
		Then, the system \eqref{601} admits $\NmuD$ with parameters  $(\Q;\alpha+\varepsilon,\beta-\varepsilon,\theta+2\varepsilon,\nu+2\varepsilon)$, where $Q$ is the invariant projector for \eqref{601} defined by $\Q(s)=S(s)^{-1}\P(s)S(s)$, for all $s\in\mathbb{R}$.
	\end{lemma}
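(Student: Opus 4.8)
The plan is to push the dichotomy estimates of \eqref{613} over to \eqref{601} through the conjugacy $S$, tracking the nonuniform factors with care. First I would record the algebra. By Lemma~\ref{604}, properties (iii)--(iv) of Definition~\ref{Kinsim} are equivalent to the intertwining relation $\Phi(t,s)S(s)=S(t)\Psi(t,s)$, i.e. $\Psi(t,s)=S(t)^{-1}\Phi(t,s)S(s)$ for all $t,s\in\mathbb{R}$. From $\Q(s)=S(s)^{-1}\P(s)S(s)$ one reads off $S(s)\Q(s)=\P(s)S(s)$ and $S(s)[\Id-\Q(s)]=[\Id-\P(s)]S(s)$; combining these with the intertwining relation, with $\P(t)\Phi(t,s)=\Phi(t,s)\P(s)$ and with $\P(s)^2=\P(s)$, one checks directly that $\Q$ is an invariant projector for \eqref{601} and that
\[
\Psi(t,s)\Q(s)=S(t)^{-1}\Phi(t,s)\P(s)S(s),\qquad \Psi(t,s)[\Id-\Q(s)]=S(t)^{-1}\Phi(t,s)[\Id-\P(s)]S(s).
\]

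Next I would estimate the two pieces. Taking norms in the first identity for $t\ge s$ and inserting (i), (ii) of Definition~\ref{Kinsim} together with \eqref{600} gives
\[
\|\Psi(t,s)\Q(s)\|\le M_\varepsilon^2 K\left(\frac{\mu(t)}{\mu(s)}\right)^{\alpha}\mu(s)^{\sgn(s)\theta}\,\mu(t)^{\sgn(t)\varepsilon}\mu(s)^{\sgn(s)\varepsilon}\qquad(t\ge s),
\]
and analogously
\[
\|\Psi(t,s)[\Id-\Q(s)]\|\le M_\varepsilon^2 K\left(\frac{\mu(t)}{\mu(s)}\right)^{\beta}\mu(s)^{\sgn(s)\nu}\,\mu(t)^{\sgn(t)\varepsilon}\mu(s)^{\sgn(s)\varepsilon}\qquad(t\le s).
\]
The heart of the matter is to absorb the spurious factor $\mu(t)^{\sgn(t)\varepsilon}$, which is attached to the ``wrong'' time $t$. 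Using only that $\mu$ is strictly increasing with $\mu(0)=1$, I would establish the two elementary inequalities
\[
\mu(t)^{\sgn(t)\varepsilon}\le\left(\frac{\mu(t)}{\mu(s)}\right)^{\varepsilon}\mu(s)^{\sgn(s)\varepsilon}\ \ (t\ge s),\qquad \mu(t)^{\sgn(t)\varepsilon}\le\left(\frac{\mu(t)}{\mu(s)}\right)^{-\varepsilon}\mu(s)^{\sgn(s)\varepsilon}\ \ (t\le s),
\]
each by a short case distinction on the signs of $t$ and $s$ (the only cases needing an argument are those in which $0$ lies strictly between $s$ and $t$). Substituting these back and collecting the two copies of $\mu(s)^{\sgn(s)\varepsilon}$ yields exactly
\[
\|\Psi(t,s)\Q(s)\|\le M_\varepsilon^2 K\left(\frac{\mu(t)}{\mu(s)}\right)^{\alpha+\varepsilon}\mu(s)^{\sgn(s)(\theta+2\varepsilon)}\ (t\ge s),\quad \|\Psi(t,s)[\Id-\Q(s)]\|\le M_\varepsilon^2 K\left(\frac{\mu(t)}{\mu(s)}\right)^{\beta-\varepsilon}\mu(s)^{\sgn(s)(\nu+2\varepsilon)}\ (t\le s).
\]

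Finally I would verify that $(\alpha+\varepsilon,\beta-\varepsilon,\theta+2\varepsilon,\nu+2\varepsilon)$ meets the sign and size constraints of Definition~\ref{NmuD}. From \eqref{603}, $\varepsilon<-\tfrac13(\alpha+\theta)\le-\tfrac13\alpha$, so $\alpha+\varepsilon<0$ and $(\alpha+\varepsilon)+(\theta+2\varepsilon)=\alpha+\theta+3\varepsilon<0$; similarly $\varepsilon<\tfrac13(\beta-\nu)\le\tfrac13\beta$ gives $\beta-\varepsilon>0$ and $(\beta-\varepsilon)-(\nu+2\varepsilon)=\beta-\nu-3\varepsilon>0$; and $\theta+2\varepsilon\ge0$, $\nu+2\varepsilon\ge0$ are immediate. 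Replacing the constant by $\max\{1,M_\varepsilon^2K\}$ then exhibits $\NmuD$ for \eqref{601} with parameters $(\Q;\alpha+\varepsilon,\beta-\varepsilon,\theta+2\varepsilon,\nu+2\varepsilon)$. I expect the genuinely delicate point to be precisely the redistribution of the factor $\mu(t)^{\sgn(t)\varepsilon}$: since $\sgn(t)$ and $\sgn(s)$ can disagree, the two displayed inequalities above must be checked by cases and cannot be dismissed by a one-line monotonicity remark; everything else is a routine composition of the available bounds.
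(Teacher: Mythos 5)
Your proof is correct and follows the same route as the paper: conjugate the dichotomy estimates of \eqref{613} by $S$ via $\Psi(t,s)=S(t)^{-1}\Phi(t,s)S(s)$, take norms, and verify the sign constraints under \eqref{603}. The paper's proof silently absorbs the factor $\mu(t)^{\sgn(t)\varepsilon}$ into $\bigl(\mu(t)/\mu(s)\bigr)^{\pm\varepsilon}\mu(s)^{\sgn(s)\varepsilon}$, whereas you spell out that absorption with a case analysis on the signs of $t$ and $s$---a welcome clarification of the one step that genuinely requires care.
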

	
	\begin{proof}
		We have
		\begin{equation*} 
		\left\{ \begin{array}{lc}
		\|\Phi(t,s)\P(s)\|\leq K
		\left( \mathlarger{\frac{\mu(t)}{\mu(s)}}\right)^\alpha\mu(s)^{\sgn(s)\theta}, &\forall\,\, t \geq s, \\
		\\ \|\Phi(t,s)[\Id-\P(s)]\|\leq K
		\left( \mathlarger{\frac{\mu(t)}{\mu(s)}}\right)^\beta\mu(s)^{\sgn(s)\nu}, &\forall\,\,  t\leq s.
		\end{array}
		\right.
		\end{equation*}
		
		From the identity $\Psi(t,s)=S(t)^{-1}\Phi(t,s)S(s)$, we obtain
		\begin{align*}
		\|\Psi(t,s)\Q(s)\|&=\|S(t)^{-1}\Phi(t,s)\P(s)S(s)\|\\
		&\leq KM_\varepsilon^2 \left(\frac{\mu(t)}{\mu(s)}\right)^{\alpha+\varepsilon}\mu(s)^{\sgn(s)(\theta+2\varepsilon)}, \qquad \forall\,t\geq s,
		\end{align*} 
		and
		\begin{align*}
		\|\Psi(t,s)[\Id-\Q(s)]\|&=\|S(t)^{-1}\Phi(t,s)[\Id-\P(s)]S(s)\|\\
		&\leq KM_\varepsilon^2 \left(\frac{\mu(t)}{\mu(s)}\right)^{\beta-\varepsilon}\mu(s)^{\sgn(s)(\nu+2\varepsilon)}, \qquad \forall\,t\leq s.
		\end{align*}
		
		Therefore, as \eqref{603} is verified, the system \eqref{601} admits $\NmuD$ with parameters $(\Q;\alpha+\varepsilon,\beta-\varepsilon,\theta+2\varepsilon,\nu+2\varepsilon)$.
	\end{proof}
	
	\begin{remark}
		{\rm
			The preceding lemma remains valid when the projector is either $0$ or $\Id$. In such cases, we only have to consider the constants $\beta$ and $\nu$ or $\alpha$ and $\theta$, respectively, due to the absence of stable or unstable manifolds.}
	\end{remark}

	In the statement of \cite[Lemma 14]{Silva} is asserted that under assumptions of nonuniform $(\mu,\varepsilon)$-kinematic similarity for systems \eqref{613} and \eqref{601}, with $\varepsilon$ satisfying \eqref{603}, the nonuniform $\mu$-dichotomy spectrum is invariant, {\it i.e.}  $\Si_{\NmuD}(A)=\Si_{\NmuD}(B)$, or equivalently, $\rho_{\NmuD}(A)=\rho_{\NmuD}(B)$. However, the origin of the parameters $\alpha$, $\beta$, $\theta$ and $\nu$ involved in \eqref{603} is not explicitly clarified on \cite[Lemma 14]{Silva}.

	\begin{remark}\label{660}
		{\rm Let us explain the dilemma with the preservation of the nonuniform $\mu$-dichotomy spectrum for systems \eqref{613} and \eqref{601} which are nonuniformly $(\mu,\varepsilon)$-kinematically similar: note that, from Corollary~\ref{623}, the corresponding $\gamma$-shifted systems associated to \eqref{613} and \eqref{601} are nonuniformly $(\mu,\varepsilon)$-kinematically similar. Hence, for  $\gamma\in\rho_{\NmuD}(A)$, and in light of Lemma~\ref{605}, it becomes evident that in the nonuniform case, the fulfillment of condition \eqref{603} is imperative for establishing the $\NmuD$ of the $\gamma$-shifted system \eqref{601}, and this dichotomy holds precisely with parameters $(\Q;\alpha+\varepsilon,\beta-\varepsilon,\theta+2\varepsilon,\nu+2\varepsilon)$. Nevertheless, the parameters in \eqref{603} strictly depend of the election of $\gamma\in\mathbb{R}$, {\it i.e.} for a different $\gamma$ in $\rho_{\NmuD}(A)$ we will have distinct constants $\alpha, \beta,\theta$ and $\nu$. Moreover, in sight of Theorem \ref{620}, for every $\varepsilon>0$, if $\gamma$ is close enough to the spectrum, we obtain either $\st_\P^\gamma+3\varepsilon>0$ or $\un_\P^\gamma-3\varepsilon<0$. Consequently, for every $(\alpha,\theta)\in \St_\P^\gamma$ or $(\beta,\nu)\in \Un_\P^\gamma$ we would obtain $\alpha+\varepsilon+\theta+2\varepsilon>0$ or $\beta-\nu-3\varepsilon<0$, which implies $(\alpha+\varepsilon,\theta+2\varepsilon)\not\in \St_\Q^\gamma$ or $(\beta-\varepsilon,\nu+2\varepsilon)\not\in \Un_\Q^\gamma$, independently of the projector $\Q$. In other words, for elements $\gamma$ which are close enough to the spectrum of the first system, the previous lemma does not allow for a proof that such element lies on the resolvent set of the second system.}
		
	\end{remark}
	
	Considering the aforementioned explanation, \cite[Lemma 14]{Silva} is apparently contradictory in light of the following examples.
	
	\begin{example}\label{ex1}
		{\rm
			Consider the systems $\dot{x}=-10x(t)$ and $\dot{y}=-12y(t)$.  Take the growth rate $\mu(t)=e^{t}$, for all $t\in\mathbb{R}$. Define the nonuniform Lyapunov function  $S(t)=e^{2t}$, for all $t\in\mathbb{R}$. Clearly, we obtain the estimations $|S(t)|,|S(t)^{-1}|\leq \mu(t)^{\sgn(t)2}$, for all $t\in \mathbb{R}$. In addition, we get the identity  
			\[
			e^{-10(t-s)}e^{2s}=e^{2t}e^{-12(t-s)}, \qquad \text{ for all $t,s\in\mathbb{R}$}.
			\]
			
			Based on the preceding information, it follows that the earlier systems are nonuniformly $(\mu,2)$-kinematically similar. Nevertheless, note that
			\[
			\Sigma_\muD(-10)=\Sigma_\NmuD(-10)=\{-10\}\neq \{-12\}=\Sigma_\NmuD(-12)=\Sigma_\muD(-12).
			\] 
			
			Therefore, the $\mu$-dichotomy spectrum is not preserved.    
		}
	\end{example}
	
	In the next, inspired by \cite[Prop.~2.3]{BV}, we provide a second illustrative example of nonuniformly $(\mu,\varepsilon)$-kinematically similar systems which do not have the same nonuniform $\mu$-dichomoty spectra. Throughout the example, we will consider the growth rate as $\mu(t) = e^t$, for all $t \in \mathbb{R}$. 
	
	\begin{example}\label{634}
		{\rm
			Let $\varepsilon > 0$ and choose positive constants $\omega_1$, $\omega_2$ and $\kappa$ such that 
			\begin{equation}\label{636}
			\omega_1 > \omega_2 > 3\kappa \quad\text{ and }\quad \omega_1 - \omega_2 < \varepsilon.
			\end{equation}
			
			Consider the nonautonomous scalar systems
			\begin{equation}\label{626}
			\dot{u}=(-\omega_1-\kappa t\sin t)u,
			\end{equation}
			\begin{equation}\label{627}
			\dot{v}=(-\omega_2-\kappa t\sin t)v,
			\end{equation}
			with evolution operators given by
			\[
			\Phi(t,s)=e^{-\omega_1(t-s)+\kappa t \cos t-\kappa s \cos s-\kappa \sin t+\kappa \sin s},
			\]
			and
			\[
			\Psi(t,s)=e^{-\omega_2(t-s)+\kappa t \cos t-\kappa s \cos s-\kappa \sin t+\kappa \sin s},
			\]
			respectively. 
			
			Define the function $S(t)=e^{(\omega_2-\omega_1)t}$, for all $t\in\mathbb{R}$. Clearly, the function $S$ verifies:
			
			\begin{itemize}
				\item $|S(t)|\leq e^{(\omega_1-\omega_2)|t|}\leq e^{\varepsilon|t|}$, for all $t\in\mathbb{R}$.
				
				\item $|S(t)^{-1}|\leq e^{(\omega_1-\omega_2)|t|}\leq e^{\varepsilon|t|}$, for all $t\in\mathbb{R}$.
				
				\item $\Phi(t,s)S(s)=S(t)\Psi(t,s)$, for all $t,s\in\mathbb{R}$.
			\end{itemize}
			
			From Lemma~\ref{604}, we deduce that systems \eqref{626} and \eqref{627} are nonuniformly $(\mu,\varepsilon)$-kinematically similar.

			On the other hand, we have
			\begin{equation}\label{637}
			|\Phi(t,s)|\leq e^{2\kappa}e^{(-\omega_1+\kappa)(t-s)}e^{2\kappa|s|},\quad\,t\geq s. 
			\end{equation}
			
			From \cite[Prop.~2.3]{BV}, we infer that the parameters involved on the above estimation are optimal in the sense that $\st_{\Id}=-\omega_1+3\kappa$.  Similarly, we obtain 
			\begin{equation}\label{638}
			|\Phi(t,s)|\leq e^{2\kappa}e^{(-\omega_1-\kappa)(t-s)}e^{2\kappa|t|},\quad\,s\geq t,
			\end{equation}
			which in conjunction with \eqref{637} implies
			\[
			|\Phi(t,s)|\leq e^{2\kappa}e^{(\omega_1+3\kappa)|t-s|}e^{2\kappa|t|},\quad\forall\,s,t\in \mathbb{R}.
			\]
			
			In other words, the system \eqref{626} has nonuniform bounded growth, and thus its spectrum is compact and nonempty. Moreover, as \eqref{626} is one-dimensional, its spectrum is just one interval, hence it has the form $[a,b]$. Now we want to compute explicitly this spectrum.
			
			From Eq. \eqref{637}, the absolute value of the $\gamma$-shifted operator can be estimated as
			\[
			|\Phi_\gamma(t,s)|\leq e^{2\kappa}e^{(-\omega_1-\gamma+\kappa)(t-s)}e^{2\kappa|s|},\quad\,t\geq s.
			\]
			
			Therefore, if $\gamma>-\omega_1+3\kappa$, then $\gamma$ belongs to the nonuniform $\mu$-resolvent of the system \eqref{626} and $\st_\Id^\gamma=-\omega_1-\gamma+3\kappa$. Moreover, from Remark~\ref{639}, we deduce that $b=-\omega_1+3\kappa$.

			Similarly, from Eq. \eqref{638}, we derive the estimation 
			\[
			|\Phi_\gamma(t,s)|\leq e^{2\kappa}e^{(-\omega_1-\gamma-\kappa)(t-s)}e^{2\kappa|t|},\quad\,s\geq t.
			\]
			
			Therefore, if $\gamma<-\omega_1-3\kappa$, then $\gamma$ belongs to the nonuniform $\mu$-resolvent of the system \eqref{626} and $\un_0^\gamma=-\omega_1-\gamma+-3\kappa$. Now, from Remark \ref{639}, we infer that $a=-\omega_1-3\kappa$.

			Following this argument, it is deduced that the nonuniform $\mu$-dichotomy spectrum of system \eqref{626} is given by 
			\[
			\Sigma_\NmuD(u)=[-\omega_1-3\kappa,-\omega_1+3\kappa].
			\]
			
			Analogously to the previous analysis, we deduce that the nonuniform $\mu$-dichotomy spectrum of the system \eqref{627} is given by $\Sigma_\NmuD(v)=[-\omega_2-3\kappa,-\omega_2+3\kappa]$. In summary, $\Sigma_\NmuD(u) \neq \Sigma_\NmuD(v)$, notwithstanding the nonuniform $(\mu,\varepsilon)$-kinematic similarity between systems \eqref{626} and \eqref{627}.
			
			Finally, we emphasize that condition $\omega_1 > \omega_2 > 3\kappa$ is added just to ensure that the spectra are contained on the nonpositive semi axis, but the computations follow the same if we drop this assumption.
		}
	\end{example}
	
	\begin{remark}
		{\rm 
			A similar conclusion to \cite[Lemma 14]{Silva}, regarding the invariance of the nonuniform $\mu$-dichotomy spectrum, was presented in the nonuniform exponential case in \cite[Lemma 3.6]{Chu}, \cite[Corollary 3.7]{Chu}, and also in \cite[Lemma 3.2]{Xiang}. As observed in Example~\ref{ex1} and \ref{634}, there is an inconsistency in affirming the invariance of the nonuniform dichotomy spectrum through nonuniform kinematic similarity. This inconsistency directly challenges the assertions made in the aforementioned references, as well as any conclusions that relied on their assumed validity.  
		}
	\end{remark}
	
	\begin{remark}\label{641}
		{\rm In literature, see e.g. \cite{BV,Huerta}, some authors consider a slightly different definition of nonuniform dichotomy, where the estimations \eqref{600} are once again considered, but conditions $\alpha+\theta<0$ and $\beta-\nu>0$ are not required, while still taking $\alpha<0<\beta$ and $\theta,\nu\geq 0$. We call this a \textbf{slow nonuniform $\mu$-dichotomy} ($\sNmuD$). This dichotomy induces a different definition of the spectrum, but as every dichotomy is in particular a slow dichotomy, we obtain $\Sigma_\sNmuD\subset \Sigma_\NmuD\subset \Sigma_\muD$, and indeed the spectra can be different sets.
			
			For instance, consider the system \eqref{626} and the growth rate $\mu(t)=e^{t}$, for all $t\in\mathbb{R}$. Therefore, we have
			\[
			\Sigma_\sNmuD(u)=[-\omega_1-\kappa,-\omega_1+\kappa]\subset \Sigma_\NmuD(u)=[-\omega_1-3\kappa,-\omega_1+3\kappa]\subset \Sigma_\muD(u)=\mathbb{R}.
			\]
			
			Nevertheless, under a similar construction that we give here, it is proved that the slow spectrum is once again noninvariant under nonuniform kinematic similarities. Indeed, in our Example \ref{634}, we still have the nonuniform kinematic similarity between \eqref{626} and \eqref{627} while $\Sigma_\sNmuD(v)=[-\omega_2-\kappa,-\omega_2+\kappa]$, which shows that the problems with the spectra invariance do not come from the specific definition of dichotomy.}
	\end{remark}
	
	In the remainder of this section, we will attempt to propose a way to address the issue of noninvariance of the nonuniform $\mu$-dichotomy spectrum. For this purpose, we will employ the optimal stable and unstable ratio functions established in the previous section.

	\begin{definition}
		The function of \textbf{global optimal ratio} is $\gor:\rho_\NmuD\to \mathbb{R}$, defined by
		
		\begin{equation*} 
		\gor^\gamma=\left\{\begin{array}{lcc}
		\un_0^\gamma, &\text{ for }&\gamma\in (-\infty,a_1) ,\\
		\min\{-\st_\P^\gamma,\un_\P^\gamma\}, &\text{ for }&\gamma\in (b_i,a_{i+1}) ,\,i\in \{1,\dots,n-1\},\\
		-\st_\Id^\gamma, &\text{ for }&\gamma\in (b_n,+\infty).
		\end{array}
		\right.
		\end{equation*}
	\end{definition}
	
	The global optimal ratio function is defined on the $\mu$-resolvent set of a system having $(\Nmu,\epsilon)$-growth (which is open). This function is continuous (Proposition \ref{625}), strictly positive, tends to zero when $\gamma$ approaches to the spectrum (Theorem \ref{620}) and to infinity when $|\gamma|$ does (Lemma \ref{629}).
	
	\begin{proposition}\label{607}
		Assume the system \eqref{613} has  $(\Nmu,\epsilon)$-growth. Assume further that systems \eqref{613} and \eqref{601} are nonuniformly $(\mu,\varepsilon)$-kinematically similar. If $\gamma\in \rho_\NmuD(A)$ and
		\begin{equation*}
		3 \varepsilon< \gor^\gamma,
		\end{equation*}
		then $\gamma\in \rho_\NmuD(B)$.
	\end{proposition}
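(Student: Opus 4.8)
The plan is to reduce the statement to Lemma~\ref{605}, applied not to \eqref{613} and \eqref{601} themselves but to their $\gamma$-shifted companions, after using the hypothesis $3\varepsilon<\gor^\gamma$ to produce a quadruple of dichotomy parameters small enough to meet the threshold \eqref{603}. Fix $\gamma\in\rho_\NmuD(A)$ and let $\P$ be the unique invariant projector attached to the spectral gap of $\Si_\NmuD(A)$ containing $\gamma$. I would split into three cases: $\gamma\in(-\infty,a_1)$ with $\P=0$; $\gamma\in(b_i,a_{i+1})$ with $1\le i\le n-1$ and $\P\notin\{0,\Id\}$; and $\gamma\in(b_n,+\infty)$ with $\P=\Id$. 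I treat the bounded case; the two unbounded ones are entirely analogous and shorter, since only one of the two estimates in \eqref{600} is present and only $\un_0^\gamma$ (respectively $-\st_\Id^\gamma$) enters $\gor^\gamma$.

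In the bounded case $3\varepsilon<\gor^\gamma=\min\{-\st_\P^\gamma,\un_\P^\gamma\}$ gives $\st_\P^\gamma<-3\varepsilon$ and $\un_\P^\gamma>3\varepsilon$. First I would choose, using that $\st_\P^\gamma$ is an infimum, a pair $(\alpha,\theta)\in\St_\P^\gamma$ with $\alpha+\theta<-3\varepsilon$, and, using that $\un_\P^\gamma$ is a supremum, a pair $(\beta,\nu)\in\Un_\P^\gamma$ with $\beta-\nu>3\varepsilon$. Next I would glue these: the two inequalities in \eqref{600} decouple — the first controls $\|\Phi_\gamma(t,s)\P(s)\|$ for $t\ge s$ and involves only $(\alpha,\theta)$, the second controls $\|\Phi_\gamma(t,s)[\Id-\P(s)]\|$ for $t\le s$ and involves only $(\beta,\nu)$ — and both are written with the same projector $\P$; taking the larger of the two constants, the $\gamma$-shifted system of \eqref{613} admits $\NmuD$ with parameters $(\P;\alpha,\beta,\theta,\nu)$, which are admissible because $\alpha+\theta<-3\varepsilon\le 0$ and $\beta-\nu>3\varepsilon\ge 0$. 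Then Corollary~\ref{623} gives that the $\gamma$-shifted systems of \eqref{613} and \eqref{601} are nonuniformly $(\mu,\varepsilon)$-kinematically similar with the same Lyapunov function and the same $\varepsilon$, while by construction $\varepsilon<\tfrac13(-(\alpha+\theta))$ and $\varepsilon<\tfrac13(\beta-\nu)$, i.e. \eqref{603} holds for this quadruple. Finally, Lemma~\ref{605} (and its remark covering $\P\in\{0,\Id\}$ in the unbounded cases) yields $\NmuD$ for the $\gamma$-shifted system of \eqref{601}, with parameters $(\,\cdot\,;\alpha+\varepsilon,\beta-\varepsilon,\theta+2\varepsilon,\nu+2\varepsilon)$ whose ratios $\alpha+\theta+3\varepsilon<0$ and $\beta-\nu-3\varepsilon>0$ remain admissible; hence $\gamma\in\rho_\NmuD(B)$.

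The step I expect to need the most care is the gluing: the witnesses $(\alpha,\theta)\in\St_\P^\gamma$ and $(\beta,\nu)\in\Un_\P^\gamma$ a priori arise from two different $\NmuD$ decompositions of the $\gamma$-shifted system, but because that system lies in a fixed spectral gap its invariant projector is unique (by \cite[Lemma 5, Lemma 6]{Silva}), so both decompositions share the same $\P$ and the stable and unstable estimates can be combined with no interaction between the two invariant bundles. Everything else is bookkeeping of the shifts $\varepsilon$ and $2\varepsilon$ produced by Lemma~\ref{605}, arranged precisely so that the constant $3$ in $3\varepsilon<\gor^\gamma$ absorbs them; I anticipate no further obstacle.
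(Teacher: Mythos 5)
Your proof is correct and follows essentially the same route as the paper: use $3\varepsilon<\gor^\gamma$ to extract parameters $(\alpha,\theta)\in\St_\P^\gamma$ and $(\beta,\nu)\in\Un_\P^\gamma$ meeting the threshold \eqref{603}, then invoke Corollary~\ref{623} and Lemma~\ref{605}. The one thing you do in addition is spell out the ``gluing'' step — that the stable witness and the unstable witness, a priori arising from two different $\NmuD$ decompositions, can be combined because the gap's projector is unique and the two inequalities in \eqref{600} decouple — which the paper treats as implicit; making it explicit is a sensible bit of care, not a deviation.
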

	
	\begin{proof}
		Assume without loss of generality that $\gamma$ lies on a bounded spectral gap, so the shifted system admits $\NmuD$ with some non-trivial projector $\P$. As the system  \eqref{613} has nonuniform $\mu$-bounded growth, the functions of optimal stable and unstable ratio $\st_\P$ and $\un_\P$ are well defined. Since $3\varepsilon<\gor^\gamma\leq-\st_\P^\gamma$, we can chose $\delta>0$ with $\delta< -\st_\P^\gamma-3\varepsilon$. By definition of $\st_\P^\gamma$, there is some $(\alpha_0,\theta_0)\in \St_\P^\gamma$ such that 
		\[\st_\P^\gamma<\alpha_0+\theta_0<\st_\P^\gamma+\delta<-3\varepsilon.\]
		Hence, we get $-(\alpha_0+\theta_0)>3\varepsilon$. Analogously we can find $(\beta_0,\nu_0)\in \Un_\P^\gamma$ such that $\beta_0-\nu_0>3\varepsilon$.

		From Corollary \ref{623} and Lemma \ref{605}, we infer that the $\gamma$-shifted system associated to \eqref{601} also admits $\NmuD$, implying $\gamma\in \rho_\NmuD(B)$.
	\end{proof}
	
	In the next, we shall proceed to define a novel notion of nonuniform $\mu$-dichotomy spectrum. This concept aims to offer a more effective approach in addressing the specific challenge posed by the noninvariance through nonuniform  $(\mu,\varepsilon)$-kinematic similarity illustrated on Examples \ref{ex1} and \ref{634}.
	
	\begin{definition}\label{epsilonspectra}
		Assume the system \eqref{613} has $(\Nmu,\epsilon)$-growth.  For every $\varepsilon\geq 0$, the \textbf{$\varepsilon$-interior of the nonuniform $\mu$-resolvent} for system \eqref{613} is the set defined by
		$$\rho_\NmuD^\varepsilon(A):=\left\{\gamma\in \rho_\NmuD: \gor^\gamma>\varepsilon\right\},$$
		and the \textbf{$\varepsilon$-neighborhood of the nonuniform $\mu$-dichotomy spectrum} for system \eqref{613} is the set defined by 
		\[
		\Sigma_\NmuD^\varepsilon(A):=\mathbb{R}\setminus\rho_\NmuD^\varepsilon(A).
		\]
	\end{definition}
	
	Note that for $\varepsilon=0$, we have $\rho_\NmuD^\varepsilon(A)=\rho_\NmuD(A)$ and $\Si_\NmuD^\varepsilon(A)=\Si_\NmuD(A)$.
	
	\begin{proposition}
		Assume the system \eqref{613} has $(\Nmu,\epsilon)$-growth. For every $\varepsilon>0$, the $\varepsilon$-neighborhood of the nonuniform $\mu$-dichotomy spectrum is a finite union of compact intervals. Moreover, it is indeed a neighborhood of the closed set $\Sigma_\NmuD(A)$,  i.e. there is an open set $U$ such that $\Sigma_\NmuD(A)\subset U\subset \Sigma^\varepsilon_\NmuD(A)$.  
	\end{proposition}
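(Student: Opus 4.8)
The plan is to first describe $\rho_\NmuD^\varepsilon(A)$ explicitly as a finite union of open intervals, two of which are unbounded rays, and then to obtain both assertions by passing to complements and by exploiting the vanishing of the optimal ratio maps at the spectral endpoints.

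To begin, I would observe that $\rho_\NmuD(A)$ is open (its complement $\Sigma_\NmuD(A)$ is closed by Theorem~\ref{602}) and that $\gor$ is continuous on $\rho_\NmuD(A)$ by Proposition~\ref{625}; hence $\rho_\NmuD^\varepsilon(A)=\gor^{-1}\big((\varepsilon,+\infty)\big)$ is open and $\Sigma_\NmuD^\varepsilon(A)$ is closed. Next I would intersect $\rho_\NmuD^\varepsilon(A)$ with each spectral gap. On the leftmost gap $(-\infty,a_1)$ we have $\gor^\gamma=\un_0^\gamma$, which is decreasing by Lemma~\ref{622}; since $\un_0^\gamma\to 0$ as $\gamma\to a_1^-$ (Theorem~\ref{620}) and $\un_0^\gamma\to+\infty$ as $\gamma\to-\infty$ (Lemma~\ref{629}), the set $\{\gamma<a_1:\un_0^\gamma>\varepsilon\}$ is an interval of the form $(-\infty,d)$ with $d<a_1$; symmetrically, $\rho_\NmuD^\varepsilon(A)\cap(b_n,+\infty)=(e,+\infty)$ with $e>b_n$. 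On a bounded gap $(b_i,a_{i+1})$ one has $\gor^\gamma=\min\{-\st_\P^\gamma,\un_\P^\gamma\}$, so
\[
\rho_\NmuD^\varepsilon(A)\cap(b_i,a_{i+1})=\{\gamma:-\st_\P^\gamma>\varepsilon\}\cap\{\gamma:\un_\P^\gamma>\varepsilon\};
\]
because $\st_\P$ and $\un_\P$ are decreasing, the first set has the form $(c,a_{i+1})$ and the second the form $(b_i,d)$, whence their intersection is the interval $(c,d)$ (empty if $c\ge d$). Since there are only finitely many gaps (Theorem~\ref{602}), $\rho_\NmuD^\varepsilon(A)$ is a finite union of open intervals containing the two rays $(-\infty,d)$ and $(e,+\infty)$, and therefore $\Sigma_\NmuD^\varepsilon(A)\subset[d,e]$ is a finite union of compact intervals.

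For the neighbourhood statement, I would use Corollary~\ref{621} (equivalently Theorem~\ref{620}): there is $\delta>0$, which I may take smaller than half the length of every bounded spectral gap, such that $|\st_\P^\gamma|<\varepsilon$ for $\gamma\in(b_i,b_i+\delta)$ and $\un_\P^\gamma<\varepsilon$ for $\gamma\in(a_i-\delta,a_i)$, for every $i=1,\dots,n$ (with the projector $\Id$ on the right unbounded gap and $0$ on the left one). Since $\st_\P^\gamma<0$ and $\gor^\gamma\le-\st_\P^\gamma$ on the gap to the right of $b_i$, while $\gor^\gamma\le\un_\P^\gamma$ on the gap to the left of $a_i$, these punctured neighbourhoods of the spectral endpoints lie in $\Sigma_\NmuD^\varepsilon(A)$; combined with the obvious inclusion $\Sigma_\NmuD(A)\subset\Sigma_\NmuD^\varepsilon(A)$, the set $U:=\bigcup_{i=1}^n(a_i-\delta,b_i+\delta)$ is open, contains $\Sigma_\NmuD(A)$, and is contained in $\Sigma_\NmuD^\varepsilon(A)$.

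The only step I expect to require genuine care is showing that the restriction of $\rho_\NmuD^\varepsilon(A)$ to a bounded gap is an interval: this is precisely the statement that a minimum of an increasing function and a decreasing function has interval super-level sets, which is the monotonicity content of Lemma~\ref{622}. Everything else is bookkeeping with the continuity (Proposition~\ref{625}), the decay at the spectrum (Theorem~\ref{620}, Corollary~\ref{621}) and the blow-up at infinity (Lemma~\ref{629}) of the optimal ratio maps, together with the finiteness of the number of spectral gaps.
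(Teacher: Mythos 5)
Your proof is correct and follows essentially the same route as the paper: continuity of $\gor$ (Proposition~\ref{625}) gives openness of $\rho_\NmuD^\varepsilon(A)$, monotonicity from Lemma~\ref{622} shows the restriction to each gap is an open interval, the decay at spectral endpoints (Theorem~\ref{620}, Corollary~\ref{621}) yields the punctured-neighbourhood inclusion, and the blow-up at infinity (Lemma~\ref{629}) bounds $\Sigma_\NmuD^\varepsilon(A)$. The one point worth noting is that the paper attributes the interval form $\rho_\NmuD^\varepsilon(A)\cap(b_i,a_{i+1})=(b_i+\delta_1,a_{i+1}-\delta_2)$ to Corollary~\ref{621} alone, whereas that corollary only gives containment of small lateral strips near $b_i$ and $a_{i+1}$ in $\Sigma_\NmuD^\varepsilon(A)$; you are right that the equality genuinely requires the monotonicity of $-\st_\P$ and $\un_\P$ (so that $\gor$, as a minimum of an increasing and a decreasing function, has interval super-level sets), and making that explicit is a small improvement in rigour over the paper's wording.
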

	
	\begin{proof}
		As mentioned earlier, the function $\gor$ is continuous and defined on the open set $\rho_\NmuD(A)$, thus the $\varepsilon$-interior of the nonuniform $\mu$-resolvent is by definition an open set contained on $\rho_\NmuD(A)$, which implies that $\Sigma_\NmuD^\varepsilon(A)$ is closed. Moreover, by Corollary \ref{621}, on every bounded spectral gap $(b_i,a_{i+1})$ there are two positive numbers $\delta_1,\delta_2>0$ such that 
		\begin{equation}\label{628}
		\rho_\NmuD^\varepsilon(A)\cap (b_i,a_{i+1})=(b_i+\delta_1,a_{i+1}-\delta_2),
		\end{equation}
		where in the case that $a_{i+1}-\delta_2<b_i+\delta_1$, we consider the right-hand  side of \eqref{628} as the empty set.

		Now, for the spectral gaps $(-\infty,a_1)$ and $(b_n,+\infty)$, from Lemma \ref{629}, we know both $-\st_\P$ and $\un_\P$ tend to $+\infty$ when $|\gamma|\to +\infty$, thus there are two positive numbers $\delta_3,\delta_4>0$ such that
		\[\rho_\NmuD^\varepsilon(A)\cap (-\infty,a_{1})=(-\infty,a_{i+1}-\delta_3),\] 
		and 
		\[\rho_\NmuD^\varepsilon(A)\cap (b_n,+\infty)=(b_n+\delta_4,+\infty),\]
		which implies that $\Sigma_\NmuD(A)^\varepsilon$ is bounded, and thus compact. Moreover, the previous intersections, in conjunction to \eqref{628}, imply that $\Sigma_\NmuD^\varepsilon(A)$ is indeed a neighborhood of $\Sigma_\NmuD(A)$.
	\end{proof}
	
	We finalize this section by presenting an important consequence from Proposition \ref{607}. In this statement, we just assume $(\Nmu,\epsilon)$-growth for system \eqref{613}, since this property is preserved through nonuniform $(\mu,\varepsilon)$-kinematic similarity.
	
	\begin{corollary}\label{608}
		Assume the system \eqref{613} has $(\Nmu,\epsilon)$-growth. If the systems \eqref{613} and \eqref{601} are nonuniformly $(\mu,\varepsilon)$-kinematically similar, then 
		
		\[
		\rho_\NmuD^{3\varepsilon}(A)\subset \rho_\NmuD(B) \quad\text{ and }\quad\rho_\NmuD^{3\varepsilon}(B)\subset \rho_\NmuD(A),
		\]
		or equivalently,
		\[
		\Sigma_\NmuD(A)\subset \Sigma_\NmuD^{3\varepsilon}(B) \quad\text{ and }\quad \Sigma_\NmuD(B)\subset \Sigma_\NmuD^{3\varepsilon}(A).
		\] 
	\end{corollary}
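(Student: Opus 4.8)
The plan is to obtain both inclusions as essentially immediate consequences of Proposition~\ref{607}, using the symmetry of nonuniform $(\mu,\varepsilon)$-kinematic similarity to pass between the two directions, and then to read off the spectral formulation by taking complements.

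First I would establish $\rho_\NmuD^{3\varepsilon}(A)\subset\rho_\NmuD(B)$. Let $\gamma\in\rho_\NmuD^{3\varepsilon}(A)$; by Definition~\ref{epsilonspectra} this says precisely that $\gamma\in\rho_\NmuD(A)$ and $\gor^\gamma>3\varepsilon$, where $\gor$ is the global optimal ratio function of system~\eqref{613} (well defined because \eqref{613} has $(\Nmu,\epsilon)$-growth). Since \eqref{613} and \eqref{601} are nonuniformly $(\mu,\varepsilon)$-kinematically similar, Proposition~\ref{607} applies directly and gives $\gamma\in\rho_\NmuD(B)$.

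For the reverse inclusion $\rho_\NmuD^{3\varepsilon}(B)\subset\rho_\NmuD(A)$ I would apply Proposition~\ref{607} after interchanging the roles of \eqref{613} and \eqref{601}. Two routine points must be checked so that this is legitimate. First, nonuniform $(\mu,\varepsilon)$-kinematic similarity is symmetric: by the second equivalence in Lemma~\ref{604}, the identity $\Phi(t,s)S(s)=S(t)\Psi(t,s)$ is equivalent to $\Psi(t,s)S(s)^{-1}=S(t)^{-1}\Phi(t,s)$, and $S^{-1}$ satisfies (i) and (ii) of Definition~\ref{Kinsim} with the same $M_\varepsilon$ and the same $\varepsilon$; hence \eqref{601} is nonuniformly $(\mu,\varepsilon)$-kinematically similar to \eqref{613}. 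Second, \eqref{601} also enjoys $(\Nmu,\epsilon')$-growth for a suitable $\epsilon'\geq 0$ --- this is the preservation property flagged just before the statement, obtained by sandwiching $\Psi(t,s)=S(t)^{-1}\Phi(t,s)S(s)$ between the bounds for $S^{\pm1}$ and for $\Phi$ and absorbing the extra factor $\mu(t)^{\sgn(t)\varepsilon}$ into a growth term; consequently the global optimal ratio function of \eqref{601} is well defined. Granting these, for $\gamma\in\rho_\NmuD^{3\varepsilon}(B)$ (that is, $\gamma\in\rho_\NmuD(B)$ and the global optimal ratio of \eqref{601} at $\gamma$ exceeds $3\varepsilon$), Proposition~\ref{607}, now with \eqref{601} playing the role of the first system, yields $\gamma\in\rho_\NmuD(A)$.

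Finally, the equivalence with the formulation in terms of the $\varepsilon$-neighborhoods of the spectra is a purely set-theoretic reformulation: taking complements in $\mathbb{R}$, the inclusion $\rho_\NmuD^{3\varepsilon}(A)\subset\rho_\NmuD(B)$ is identical to $\Sigma_\NmuD(B)\subset\Sigma_\NmuD^{3\varepsilon}(A)$, and symmetrically for the other pair. I expect the only genuine content --- hence the main, though minor, obstacle --- to be the verification that $(\Nmu,\epsilon)$-growth is preserved under the Lyapunov change of variables with a merely shifted nonuniformity parameter; every other step is a direct appeal to Proposition~\ref{607} or to Lemma~\ref{604}.
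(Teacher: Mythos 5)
Your proposal is correct and follows essentially the same route the paper has in mind: the corollary is presented there precisely as a direct consequence of Proposition~\ref{607}, with the authors explicitly remarking that $(\Nmu,\epsilon)$-growth is preserved under nonuniform $(\mu,\varepsilon)$-kinematic similarity (which is the point you isolate as the only nontrivial check), and the spectral form is obtained by taking complements just as you do.
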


	\begin{remark}
		{\rm
			Note that the previous corollary is consistent with Examples \ref{ex1} and \ref{634}. For instance, let us consider the systems \eqref{626} and \eqref{627}. Since $\omega_1-\omega_2<\varepsilon$, we infer the following inclusion
			\[\Sigma_\NmuD(u)=[-\omega_1-3\kappa,-\omega_1+3\kappa]\subset [-\omega_2-3\varepsilon-3\kappa,-\omega_2+3\varepsilon+3\kappa]=\Sigma_\NmuD^{3\varepsilon}(v)\,. \]
		}
	\end{remark}

	\section{Endgame remarks}
	
	In this section, we present a synthesis of final remarks derived from our main results: 
	
	\begin{itemize}
		\item In light of the preceding Corollary~\ref{608}, we are unable to establish, using the available tools, that if systems \eqref{613} and \eqref{601} are nonuniformly $(\mu,\varepsilon)$-kinematically similar, then $\rho_\NmuD(A) = \rho_\NmuD(B)$, nor even $\rho_\NmuD^{3\varepsilon}(B) = \rho_\NmuD^{3\varepsilon}(A)$.
		
		\smallskip
		
		\item In \cite[Lemma 14]{Silva}, it was stated that the demonstration we give here for Lemma~\ref{605} is enough to conclude equality of the $\mu$-resolvent sets, or equivalently, the equality of the nonuniform $\mu$-dichotomy spectra. However, our Theorem~\ref{620} reveals that for every $\varepsilon>0$, the set $\rho_\NmuD^\varepsilon(A)$ is strictly contained in $\rho_\NmuD(A)$, hence the best possible conclusion from Lemma \ref{605} is our Proposition \ref{607} or our Corollary \ref{608}, {\it i.e.} contention of the $\varepsilon$-interior, but not equality of the resolvent sets or spectra. Consequently, our results highlight certain inconsistencies in the proofs of \cite[Lemma 14]{Silva}, \cite[Corollary 3.7]{Chu} and \cite[Lemma 3.2]{Xiang}, where the assertion of the invariance of the nonuniform dichotomy spectrum through nonuniform kinematic similarity is affirmed. Thereupon, our findings call into question as well the results that depend on the assumption of the spectrum invariance. It could be interesting to revisit them.
		
		\smallskip
		
		\item The result about the invariance of uniform exponential dichotomy espectra under classic uniform kinematic similarity \cite[Corollary~2.1]{Siegmund2} still holds, since in the case of a classic kinematic similarity, we have a Lyapunov matrix function verifying (i) and (ii) from Definition~\ref{Kinsim} with $\varepsilon=0$. Thus, the same proof we give here for Lemma~\ref{605} shows the spectra invariance. In addition, we can provide an invariance result of the nonuniform $\mu$-dichotomy spectrum considering uniform kinematically similar systems:
		\begin{corollary}\label{640}
			Assume the systems \eqref{613} and \eqref{601} are uniformly kinematically similar. Then $\rho_\NmuD(A)= \rho_\NmuD(B)$, or equivalently, $\Sigma_\NmuD(B)=\Sigma_\NmuD(A)$.
		\end{corollary}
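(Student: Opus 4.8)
The plan is to exploit the fact that uniform kinematic similarity is precisely the case $\varepsilon=0$ of Definition \ref{Kinsim}, so that the whole difficulty isolated in Remark \ref{660} --- the possible failure of condition \eqref{603} when $\gamma$ approaches the spectrum --- evaporates: condition \eqref{603} with $\varepsilon=0$ reads $0<\tfrac13\min\{-(\alpha+\theta),\beta-\nu\}$, and this is built into the very definition of $\NmuD$.

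First I would fix $\gamma\in\rho_\NmuD(A)$, so that the $\gamma$-shifted system associated to \eqref{613} admits $\NmuD$ with some parameters $(\P;\alpha,\beta,\theta,\nu)$; by Definition \ref{NmuD} these satisfy $\alpha+\theta<0$ and $\beta-\nu>0$, hence $\min\{-(\alpha+\theta),\beta-\nu\}>0$ and \eqref{603} holds for $\varepsilon=0$ with no constraint on $\gamma$ whatsoever. Next, by Corollary \ref{623} the $\gamma$-shifted systems of \eqref{613} and \eqref{601} are again uniformly kinematically similar, with the same Lyapunov transformation $S$ and again with $\varepsilon=0$. Applying Lemma \ref{605} (together with the remark following it, to absorb the cases $\P=0$ and $\P=\Id$ that occur on the unbounded spectral gaps) with $\varepsilon=0$, the $\gamma$-shifted system of \eqref{601} admits $\NmuD$ with the unchanged parameters $(\Q;\alpha,\beta,\theta,\nu)$, where $\Q(s)=S(s)^{-1}\P(s)S(s)$. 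Therefore $\gamma\in\rho_\NmuD(B)$, which gives $\rho_\NmuD(A)\subset\rho_\NmuD(B)$.

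For the opposite inclusion I would observe that kinematic similarity is symmetric: if $S$ is a uniform Lyapunov matrix function with $\Phi(t,s)S(s)=S(t)\Psi(t,s)$, then $S^{-1}$ is a uniform Lyapunov matrix function satisfying (i) and (ii) of Definition \ref{Kinsim} with the same constant and $\varepsilon=0$, and $\Psi(t,s)S(s)^{-1}=S(t)^{-1}\Phi(t,s)$; by Lemma \ref{604} this means \eqref{601} is uniformly kinematically similar to \eqref{613}. Running the previous argument with the roles of $A$ and $B$ exchanged yields $\rho_\NmuD(B)\subset\rho_\NmuD(A)$, hence $\rho_\NmuD(A)=\rho_\NmuD(B)$, i.e. $\Sigma_\NmuD(A)=\Sigma_\NmuD(B)$. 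I do not expect any genuine obstacle here: the only delicate point is the verification that the hypothesis \eqref{603} of Lemma \ref{605} is automatically met when $\varepsilon=0$ --- precisely the step that breaks down, by Theorem \ref{620}, as soon as $\varepsilon>0$. (If one additionally assumes $(\Nmu,\epsilon)$-growth for \eqref{613}, the conclusion also follows at once from Proposition \ref{607}, since then $\gor^\gamma>0=3\varepsilon$ for every $\gamma$ in the resolvent.)
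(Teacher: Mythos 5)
Your proposal is correct and is precisely the argument the paper's one-line proof ("It follows from Lemma \ref{605}") intends: with $\varepsilon=0$ condition \eqref{603} is automatic from $\alpha+\theta<0$, $\beta-\nu>0$, so Lemma \ref{605} (via Corollary \ref{623}) transfers the $\NmuD$ of each $\gamma$-shifted system with unchanged parameters, and symmetry of uniform kinematic similarity gives the reverse inclusion.
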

		\begin{proof}
			It follows from Lemma \ref{605}.
		\end{proof}

		\begin{example}
			{\rm  Define the functions $\mathfrak{a}(t)=-\omega-\kappa t \sin t$, and $\mathfrak{b}(t)=-\kappa t\sin t$, for all $t\in\mathbb{R}$, where $\omega>6\kappa>0$. Consider the planar equations $\dot{x}=A(t)x$ and $\dot{y}=B(t)y$, where
				\[
				A(t)=\begin{pmatrix}
				\mathfrak{a}(t) & 0\\
				0& \mathfrak{b}(t)
				\end{pmatrix}, \qquad t\in\mathbb{R},
				\]
				and 
				\[
				B(t)=\begin{pmatrix}
				\mathfrak{a}(t)\cos^2 t+\mathfrak{b}(t)\sin^2 t& \omega\cos t \sin t+1\\
				\omega\cos t\sin t-1& \mathfrak{a}(t)\sin^2t+\mathfrak{b}(t)\cos^2t
				\end{pmatrix}, \qquad t\in\mathbb{R}.
				\]
				
				The evolution operators $\Phi_A$ and $\Psi_B$ are given by
				\[\Phi_A(t,s)=\begin{pmatrix}
				e^{-\omega(t-s)+k(t,s)} & 0\\
				0& e^{k(t,s)}
				\end{pmatrix},
				\]
				where $k(t,s)=\kappa t \cos t-\kappa s \cos s-\kappa \sin t+\kappa \sin s$, for all $t,s\in\mathbb{R}$,   and
				\[
				\Psi_B(t,s)=\begin{pmatrix}
				a_{11}(t,s)&a_{12}(t,s)\\
				a_{21}(t,s)& a_{22}(t,s)
				\end{pmatrix},  \qquad t,s\in\mathbb{R},
				\]
				where
				\begin{itemize}
					\item[] $a_{11}(t,s)=\cos t  \cos s\,e^{-\omega(t-s)+k(t,s)} +\sin t\sin s\,e^{k(t,s)}$,
					\item[] $a_{12}(t,s)=-  \cos t\sin s\,e^{-\omega(t-s)+k(t,s)} +\sin t\cos s\,e^{k(t,s)}$,
					\item[] $a_{21}(t,s)=- \sin t\cos s\,e^{-\omega(t-s)+k(t,s)} +\cos t\sin s\,e^{k(t,s)}$,
					\item[] $a_{22}(t,s)=\sin t  \sin s\, e^{-\omega(t-s)+k(t,s)} +\cos t\cos s\,e^{k(t,s)}$.
				\end{itemize}
				
				Let us define
				\[
				S(t)=\begin{pmatrix}
				\cos t & -\sin t\\
				\sin t& \cos t
				\end{pmatrix}, \qquad t\in\mathbb{R}.
				\]
				Clearly, $\|S(t)\|=\|S(t)^{-1}\|=1$, for all $t\in \mathbb{R}$. In addition, the identity
				\[
				S(t)\Psi_B(t,s)=\Phi_A(t,s)S(s)
				\]
				holds, for all $t,s\in\mathbb{R}$. In consequence, the planar systems are uniformly kinematically similar.

				Proceeding analogously as we computed the spectrum on Example \ref{634}, it is deduced that $\Sigma_\NmuD(A)=[-\omega-3\kappa,-\omega+3\kappa]\cup[-3\kappa,3\kappa]$. Thus, by virtue of Corollary~\ref{640}, we obtain $\Sigma_\NmuD(B)=[-\omega-3\kappa,-\omega+3\kappa]\cup[-3\kappa,3\kappa]$.
			}
		\end{example}
		
		\item For a similar reason, Silva's reducibility result \cite[Theorem 12]{Silva} still holds, but only for the uniform $\mu$-dichotomy case, that is, when the system verifies uniform $\mu$-bounded growth.
		
		\smallskip
		
		\item Even if the spectrum can change, by the continuity of the global optimal ratio map $\gor$, some characteristics can still be held for nonuniformly kinematically similar systems. Consider for instance the following conclusions immediately obtained from Corollary \ref{608}.
		
		\begin{corollary}
			Assume the system \eqref{613} has $(\Nmu,\epsilon)$-growth and verifies that $\Sigma_\NmuD(A)\subset (-\infty,0)$. Then, there is some $\varepsilon>0$ such that every system which is nonuniformly $(\mu,\varepsilon)$-kinematically similar to \eqref{613} also has its spectrum contained on $(-\infty,0)$.
		\end{corollary}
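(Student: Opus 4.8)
The plan is to reduce the statement to Corollary~\ref{608} by choosing $\varepsilon$ so small that the whole half‑line $[0,+\infty)$ lies in the $3\varepsilon$‑interior of the nonuniform $\mu$‑resolvent of \eqref{613}. First I would record the following geometric fact: by Theorem~\ref{602} the set $\Sigma_\NmuD(A)$ is nonempty and compact, so $b_n:=\max\Sigma_\NmuD(A)$ is well defined, and the hypothesis $\Sigma_\NmuD(A)\subset(-\infty,0)$ forces $b_n<0$. Consequently the entire interval $[0,+\infty)$ is contained in the unbounded spectral gap $(b_n,+\infty)$, on which the projector is $\Id$ and the global optimal ratio is $\gor^\gamma=-\st_\Id^\gamma$.

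Next I would establish a uniform positive lower bound for $\gor$ on $[0,+\infty)$. Recall that $\gor$ is continuous on $\rho_\NmuD(A)$ (Proposition~\ref{625}) and strictly positive on every spectral gap, since $\st_\P$ is negative and $\un_\P$ positive there (Remark~\ref{639}). By Lemma~\ref{629}, $\gor^\gamma=-\st_\Id^\gamma\to+\infty$ as $\gamma\to+\infty$, so there is $R>0$ with $\gor^\gamma>1$ for all $\gamma\ge R$; on the compact interval $[0,R]$ the continuous positive function $\gor$ attains a minimum $m>0$. Hence $c:=\min\{m,1\}>0$ satisfies $\gor^\gamma\ge c$ for every $\gamma\ge 0$.

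Now I would fix any $\varepsilon\in(0,c/3)$. For such $\varepsilon$ and every $\gamma\ge 0$ we have $3\varepsilon<c\le\gor^\gamma$, so $\gamma\in\rho_\NmuD^{3\varepsilon}(A)$ by Definition~\ref{epsilonspectra}; that is, $[0,+\infty)\subset\rho_\NmuD^{3\varepsilon}(A)$. Let \eqref{601} be any system nonuniformly $(\mu,\varepsilon)$‑kinematically similar to \eqref{613}; as noted before Corollary~\ref{608}, such a system also has $(\Nmu,\epsilon)$‑growth, so Corollary~\ref{608} applies and gives $\rho_\NmuD^{3\varepsilon}(A)\subset\rho_\NmuD(B)$. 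Combining the two inclusions yields $[0,+\infty)\subset\rho_\NmuD(B)$, i.e. $\Sigma_\NmuD(B)\subset(-\infty,0)$, which is the desired conclusion.

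\textbf{Main obstacle.} There is essentially no real obstacle: the corollary is a packaging of earlier results. The only point needing care is that $[0,+\infty)$ is not compact, which is precisely why Lemma~\ref{629} is invoked to control $\gor$ near $+\infty$ before appealing to compactness of $[0,R]$; and one must not forget that $(\Nmu,\epsilon)$‑growth—and hence the applicability of Corollary~\ref{608}—is inherited by \eqref{601} from \eqref{613} through the nonuniform $(\mu,\varepsilon)$‑kinematic similarity.
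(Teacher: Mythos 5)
Your proof is correct and follows exactly the route the paper intends (the paper labels this corollary as "immediately obtained from Corollary~\ref{608}" and omits the details). You correctly supply the missing step: using continuity of $\gor$ on $[0,R]$ and Lemma~\ref{629} to bound $\gor$ from below by a positive constant on all of $[0,+\infty)$, then choosing $\varepsilon < c/3$ so that $[0,+\infty)\subset\rho_\NmuD^{3\varepsilon}(A)\subset\rho_\NmuD(B)$.
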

		
		The previous conclusion is relevant since the condition of having the spectrum contained in one of the semiaxis is a usual characterization of stable or unstable systems. A similar strategy can be employed to prove the following result regarding the notion of nonresonant spectra \cite{Jara5,Cuong,Siegmund3}, which, as mentioned on the introduction, has important applications in the realm of linearization of nonautonomous systems.
		
		\begin{corollary}
			Assume the system \eqref{613} has $(\Nmu,\epsilon)$-growth and its spectrum $\Si_\NmuD(A)=[a_1,b_1]\cup\cdots\cup [a_n,b_n]$ is nonresonant up to a certain order $p\in \mathbb{N}$, i.e. for every
			$j=1,\dots,n$ and every $(k_1,\dots,k_n)\in (\mathbb{N}\cup\{0\})^n$ with $2\leq \sum_{i=1}^nk_i\leq p$, we have
			\begin{equation*}
			[a_j,b_j]\cap \left[\sum_{i=1}^nk_ia_i,\sum_{i=1}^nk_ib_i\right]=\emptyset,
			\end{equation*}
			then, there is some $\varepsilon>0$ such that every system which is nonuniformly $(\mu,\varepsilon)$-kinematically similar to \eqref{613} also has a nonresonant spectrum up to the order $p\in \mathbb{N}$.
		\end{corollary}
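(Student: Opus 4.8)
The plan is to combine Corollary~\ref{608} with an elementary robustness argument for the nonresonance condition, after first recasting nonresonance in a form that does not refer to a particular decomposition of the spectrum into intervals. The key preliminary observation is the following: for a nonempty compact set $\Lambda=\bigcup_i[c_i,d_i]\subset\mathbb{R}$, the condition ``$[c_j,d_j]\cap\big[\sum_i k_ic_i,\sum_i k_id_i\big]=\emptyset$ for all $j$ and all $(k_i)$ with $2\le\sum_i k_i\le p$'' is equivalent to $\Lambda\cap(m\Lambda)=\emptyset$ for every $m\in\{2,\dots,p\}$, where $m\Lambda:=\{\lambda_1+\cdots+\lambda_m:\lambda_1,\dots,\lambda_m\in\Lambda\}$ denotes the $m$-fold sumset. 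Indeed, grouping $m$ chosen points according to which interval $[c_i,d_i]$ contains them, and using convexity of intervals, one gets $m\Lambda=\bigcup_{\sum k_i=m}\big[\sum k_ic_i,\sum k_id_i\big]$. This reformulation is essential because a system nonuniformly $(\mu,\varepsilon)$-kinematically similar to \eqref{613} may have a spectrum with a different number of spectral intervals than $\Sigma_\NmuD(A)$ (intervals may merge), so one cannot simply perturb endpoints interval by interval.

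Next I would quantify the robustness. Write $\Sigma:=\Sigma_\NmuD(A)$, which is a nonempty finite union of compact intervals by Theorem~\ref{602}. Since $\Sigma$ is nonresonant up to order $p$, for each $m\in\{2,\dots,p\}$ the nonempty compact sets $\Sigma$ and $m\Sigma$ are disjoint, hence $d_m:=\dist(\Sigma,m\Sigma)>0$; set $d:=\min_{2\le m\le p}d_m>0$ and fix $\eta\in(0,\,d/(p+1))$, and write $\Sigma^{(\eta)}:=\{\gamma\in\mathbb{R}:\dist(\gamma,\Sigma)\le\eta\}$. A triangle-inequality computation then shows that every compact $\Lambda\subset\Sigma^{(\eta)}$ satisfies $\Lambda\cap(m\Lambda)=\emptyset$ for all $m\in\{2,\dots,p\}$: if $x\in\Lambda\cap m\Lambda$, then $\dist(\Sigma,m\Sigma)\le\dist(x,\Sigma)+\dist(x,m\Sigma)\le\eta+m\eta\le(p+1)\eta<d\le d_m$, a contradiction. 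Combined with the first paragraph, this means that every compact subset of $\Sigma^{(\eta)}$ which is a finite union of intervals is nonresonant up to order $p$.

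Then I would control the $\varepsilon$-neighborhoods of the spectrum. For each $\gamma\in\rho_\NmuD(A)$ one has $\gor^\gamma>0$, so $\gamma\in\rho_\NmuD^\varepsilon(A)$ whenever $\varepsilon<\gor^\gamma$; hence $\bigcap_{\varepsilon>0}\Sigma_\NmuD^\varepsilon(A)=\Sigma$, and the sets $\Sigma_\NmuD^\varepsilon(A)$ are nested (increasing in $\varepsilon$) and, for $\varepsilon>0$, compact by the Proposition preceding Corollary~\ref{608}. Applying the finite intersection property to the nested family of compact sets $\Sigma_\NmuD^\varepsilon(A)\setminus\{\gamma:\dist(\gamma,\Sigma)<\eta\}$ (which has empty total intersection), we obtain $\varepsilon_0>0$ with $\Sigma_\NmuD^{\varepsilon_0}(A)\subset\Sigma^{(\eta)}$. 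Now take $\varepsilon:=\varepsilon_0/3$. If system \eqref{601} is nonuniformly $(\mu,\varepsilon)$-kinematically similar to \eqref{613}, then \eqref{601} has $(\Nmu,\cdot)$-growth (preserved under nonuniform $(\mu,\varepsilon)$-kinematic similarity), so $\Sigma_\NmuD(B)$ is a nonempty finite union of compact intervals by Theorem~\ref{602}; and by Corollary~\ref{608}, $\Sigma_\NmuD(B)\subset\Sigma_\NmuD^{3\varepsilon}(A)=\Sigma_\NmuD^{\varepsilon_0}(A)\subset\Sigma^{(\eta)}$. By the second paragraph, $\Sigma_\NmuD(B)$ is nonresonant up to order $p$, as claimed.

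The step that deserves the most care, and the conceptual crux, is the first one: recognizing that the interval-by-interval resonance conditions are equivalent to the intrinsic sumset-disjointness conditions $\Sigma\cap(m\Sigma)=\emptyset$. Attempting instead to track the endpoints of individual spectral intervals fails, precisely because nonuniform $(\mu,\varepsilon)$-kinematic similarity need not preserve the combinatorial type of the spectrum; the sumset formulation depends only on the set $\Sigma$ and is exactly what makes the robustness estimate in the second paragraph transfer to $\Sigma_\NmuD(B)$. The remaining ingredients are routine point-set topology (nested compacts and the finite intersection property) together with the already-established continuity and positivity of $\gor$ on the resolvent and the containment $\Sigma_\NmuD(B)\subset\Sigma_\NmuD^{3\varepsilon}(A)$ from Corollary~\ref{608}.
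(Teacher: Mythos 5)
Your proof is correct and fills in a gap the paper leaves open: the paper offers no proof of this corollary, merely stating that ``a similar strategy'' to the preceding stability corollary, itself declared to be ``immediately obtained from Corollary~\ref{608},'' applies. Your argument follows exactly that indicated route—use Corollary~\ref{608} to trap $\Sigma_\NmuD(B)$ inside $\Sigma_\NmuD^{3\varepsilon}(A)$, shrink $\varepsilon$ so this lies inside an $\eta$-neighborhood of $\Sigma_\NmuD(A)$, and argue that nonresonance is robust under such a perturbation—but you correctly identify and resolve the one genuine obstacle that the paper glosses over: the nonresonance condition as literally stated depends on the interval decomposition, and the spectrum of $B$ may have a different number of spectral intervals (they can merge or split). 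Your reformulation via $m$-fold sumsets, $\Sigma\cap(m\Sigma)=\emptyset$ for $2\le m\le p$, is exactly the intrinsic, decomposition-free version needed to make the robustness argument transfer, and the Minkowski-sum computation showing it coincides with the interval-by-interval condition is correct. The quantitative robustness estimate ($\eta<d/(p+1)$ with the triangle inequality against $d_m=\dist(\Sigma,m\Sigma)$) and the compactness/nested-intersection argument producing $\varepsilon_0$ with $\Sigma_\NmuD^{\varepsilon_0}(A)\subset\Sigma^{(\eta)}$ are both sound, relying correctly on $\bigcap_{\varepsilon>0}\Sigma_\NmuD^\varepsilon(A)=\Sigma_\NmuD(A)$ (a consequence of $\gor>0$ on the resolvent) and compactness of $\Sigma_\NmuD^\varepsilon(A)$ from the proposition preceding Corollary~\ref{608}. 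In short: not merely a correct proof, but a more careful one than the paper's sketch suggests, since it handles the change of combinatorial type of the spectrum that ``immediately obtained'' quietly ignores.
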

		
		\item A question we leave open is the rate at which the convergence in Theorem \ref{620} is attained. Note that in all our examples we have the estimations $|\st_\P^\gamma|\leq |\gamma-b_i|$ and $\un_\P^\gamma\leq |a_{i+1}-\gamma|$, for $\gamma$ in the correspondent spectral gap. We conjecture that this estimation is always valid,  but we have not been able to prove it or find a counterexample.
	\end{itemize}

\end{document}